\def\BibTeX{{\rm B\kern-.05em{\sc i\kern-.025em b}\kern-.08em
    T\kern-.1667em\lower.7ex\hbox{E}\kern-.125emX}}
\newtheorem {theorem} {Theorem}
\newtheorem {lemma} {Lemma}
\begin{document}

\title{Quantum Walk Random Number Generation: Memory-based Models}

\author{\IEEEauthorblockN{Minu J. Bae}
\IEEEauthorblockA{\textit{Computer Science Department} \\
\textit{University of Connecticut}\\
Storrs CT, USA \\
minwoo.bae@uconn.edu}
}

\maketitle

\begin{abstract}
The semi-source independent quantum walk random number generator (SI-QW-QRNG) is a cryptographic protocol that extracts a string of true random bits from a quantum random walk with an adversary controls a randomness source, but the dimension of the system is known. This paper analyzes SI-QW-QRNG protocols with a memory-based quantum walk state. The new protocol utilizes a generalized coin operator with various parameters to optimize the randomness of the quantum walk state. We focus on evaluations of the protocols in multiple scenarios and walk configurations. Moreover, we show some interesting behavior of the system depending on the size of the memory space and the number of quantum coins.
\end{abstract}


\section{Introduction}
A string of uniform and independent (true) random bits from random number generation has an essential role in many fields, such as cryptography, scientific simulations, game-theoretic protocols, artificial intelligence, lotteries, and basic physical tests. These works depend on the unpredictability of random numbers. Pseudo-random numbers generated via classical processes generally cannot guarantee intrinsic unpredictability. But quantum random number generation (QRNG) can produce true random numbers because of the inherent randomness throughout quantum processes. So, researching quantum random number generation (QRNG) is essential in quantum information science and engineering. Currently, cryptographically secure QRNG protocols are well-reviewed, ranging from the case of all devices utilized, sources, and measurements are fully characterized, called ``fully trusted-device'' scenario, to the case of all apparatuses operated in a QRNG protocol are not trusted, ``device-independent (DI)'' approach \cite{di-qrng0, di-qrng1, di-qrng2}. The DI-QRNG protocol is the ideal model because of its minimal assumptions for security from a cryptographic point of view. A practical experiment for the DI-QRNG protocol has been rapidly improving, but the bit-rates of such protocols cannot comply with other models \cite{di-qrng-exp,di-qrng-exp-2,si-qrng-fast}. As a midpoint, the \emph{source independent} (SI) model was introduced in \cite{vallone2014quantum} by assuming measurement devices are characterized, but the source is under the control of a dishonest party, called an adversary. In the SI model, a user wishes to generate cryptographically secure random bits without trusting the server that is probably under the adversary's control. For further information, we refer the reader to view a survey of QRNG \cite{qrng-survey
}. 

Moreover, quantum walks (QW) are the quantum analog of classical random walks, which is an essential process in quantum computation \cite{farhi1998quantum,aharonov2001quantum,bednarska2003qwalk, childs2003exponential,childs2009universal,lovett2010universal, renato2013qwalk, proctor2014non, ashley2016qalgorithms, santha2008qwalgorithms, magniez2011qwalgorithms, balu2017qwalgorithms, kadian2021quantumwalk} and, recently, in quantum cryptography \cite{rohde2012quantum,vlachou2015quantum,vlachou2018quantum,srikara2020quantum}. Lately, a QW-based random number generation (QW-QRNG) protocol was introduced in \cite{QW-QRNG}. The first security analysis of the QW-QRNG protocol is provided to be secure in the semi-source independent (SI) model \cite{bae2021quantum}. In this paper, we extend the results from \cite{bae2021quantum} to show a more rigorous evaluation of the SI-QW-QRNG protocol with different walk parameters such as alternative coin operators and alternative models such as history-dependent quantum walks \cite{brun2003quantum, faj_2004, mcgettrick2009one, rohde2013quantum, mcgettrick2014cycle, krawec2015history}. The history-dependent quantum walk is one of the memory-based quantum walk models in which the walker exploits memory coins as it evolves. The state of the history-dependent walker contains its evolution history in the memory coins. In this paper, we employ the history-dependent quantum walk state to design a memory-based QW-QRNG protocol as the SI secure model.

We make three primary contributions to this paper. First, we develop a memory-based QW-QRNG protocol with various cases. Second, we analyze the multiple protocols from a cryptographic perspective, which produces secure systems and proves they are secured in the SI model. This analysis designates the first memory-based QW-QRNG protocol in the SI security model. Our proof utilizes the method of quantum sampling by Bouman and Fehr \cite{quantum_sampling}, expanded with techniques we developed in \cite{krawec2020new, bae2021quantum} for entropic uncertainty. Thirdly, we carefully and thoroughly evaluate the various protocols with the Hadamard, generalized, and coin-flip operators \cite{vlachou2018quantum} in the memory-based walker's evolution to optimize a random bit rate. We show some fascinating manners of the quantum random number generation procedures depending on increasing the size of the memory space and the number of quantum coins over different noises in simulated channels. For example, our protocols can improve the random bit rate against overall noises with an odd number of qubits and enhance the random bit rate of the low-dimensional position space against general noises with an even number of qubits. This odd-even behavior shows in various areas of quantum physics, such as magnetic molecular clusters, quantum dots, heavy-ion collisions, Bernal-stack trilayer graphene, the quantum Ising chain, and the quantum Szilard machine \cite{wernsdorfer1999parity,schmid2000parity,kanghun2000parity, orellana2003parity, eichler2007parity, jun2011parity, yao2012parity, beenakker2013parity, thilagam2013parity, zekun2014parity, pal2016parity, jun2019parity, petr2019parity, razmadze2020parity, parijat2020parity, congyi2021parity, yuanye2022parity, arindam2022parity, william2022parity, shaikh2022parity, sergeevich2022parity, sassetti2022parity}. Further understanding the memory-based QW-QRNG may open the door to constructing a new quantum cryptographic application, such as memory-based QW-Quantum Key Distribution (QKD) and designing a nonlocal game with an entangled quantum walk state without the freedom of will assumption that may lead to the efficient DI-QRNG and QKD with less loopholes.

\section{Preliminaries}
\subsection{Notation and Definitions}
In this section, we introduce underlying definitions and notations that will be used all over this paper. A notation $\mathcal{A}_{d}$ means a $d$-dimensional alphabet, that is, $\mathcal{A}_{d} = \{0,...,d-1\}$. So, consider a word $q\in \mathcal{A}_{d}^{N}$ and a random subset $t \subseteq [N]$, where $[N]=\{1,2,...,N\}$. Then a new word $q_{t}$ represents the substring of $q$ indexed by $t$, namely a letter in $q$ indexed by $i\in t$. A string $q_{\bar{t}}$ implies the complement of $q_{t}$ in $q$. The \emph{Hamming Weight} of $q$ is a function $wt(q) = \vert\{i : q_{i}\neq 0\}\vert$. And the \emph{relative Hamming weight} is $w(q) = wt(q)/|q|$.

A Hermitian positive semi-definite operator of unit trace that acts on a Hilbert space $\mathcal{H}$ is a \textit{density operator}. Suppose a pure quantum state $\ket{\varphi}\in\mathcal{H}$ is given. Then the \textit{density operator} of the pure state is denoted $\ketbra{\varphi}$. We simplify it as the symbol $[\varphi]$ to mean $\ketbra{\varphi}$. 

A notation $H(X)$ means the Shannon entropy of a random variable $X$. And $h_{d}(x)$ represents the $d$-ary entropy function, which is defined as: $h_{d} = x\log_{d}(d-1) - x\log_{d}(x) - (1-x)\log_{d}(1-x)$. Also, a definition $\bar{H}_{d}(x)$ is the \textit{extended $d$-ary entropy function} which is identical to $h_{d}(x)$ for all $x\in[0,1-1/d]$, is $0$ for all $x<0$, otherwise $1$ for all $x>1-1/d$.

Suppose $\rho_{BE}$ be a quantum state that acts on an arbitrary Hilbert space $\mathcal{H}_{B}\otimes \mathcal{H}_{E}$. The \textit{conditional quantum min entropy} \cite{renner2008security} is defined as follows:
$
H_{\infty}(B|E)_{\rho} = \sup_{\sigma_{E}}\max \{\lambda \in \mathbb{R}: 2^{-\lambda} I_{B}\otimes \sigma_{E} - \rho_{BE}\geq 0\},
$
where $I_{B}$ is the identity operator on $\mathcal{H}_{B}$. Consider that $\rho_{B}= \sum_{b}p_{b}[b]$ means the $E$ system is trivial and the $B$ portion is classical. So, its min entropy is $H_{\infty}(B) = -\log \max_{b}p_{b}$. The \textit{smooth conditional min entropy} \cite{renner2008security} is defined as follows:
$
H_{\infty}^{\varepsilon}(B|E)_{\rho} = \sup_{\sigma\in \Gamma_{\varepsilon}(\rho)} H_{\infty}(B|E)_{\sigma},
$
where $\Gamma_{\varepsilon}(\rho) = \{\sigma : \norm{\sigma - \rho}\leq \varepsilon\}$. Note that $\norm{X}$ is the trace distance of operator $X$.

If the $E$ portion is classical, namely the quantum-classical state $\rho_{BE} = \sum_{e}p_{e}\rho_{B}^{e}\otimes [e]$, then it can be shown that:
$
H_{\infty}(B|E)_{\rho} \geq \min_{e}H_{\infty}(B)_{\rho^{e}}.
$
Moreover, the quantum-quantum-classical state $\rho_{BER}$ has the following inequality. Let the quantum-quantum-classical state $\rho_{BER}$ be of the form $\rho_{BER} = \sum_{r}p_{r}\cdot\rho_{BE}^{r}\otimes [r]$, where the $R$ portion is classical. Then we have the following inequality:
\begin{equation}\label{eq:min_min_ent}
H_{\infty}(B|ER)_{\rho} \geq \min_{r} H_{\infty}(B|E)_{\rho^{r}},
\end{equation}
which proof is not difficult to show by using the above definitions regarding the conditional quantum min entropy.

Suppose a classic-quantum state $\rho_{BE}$ is given. Then consider $\sigma_{YE}$ is the result of a privacy amplification process on the $B$ register of the state. Through a randomly chosen two-universal hash function, the process maps the $B$ register on the $Y$ register. If the output $\ell$ bits is long, then the following relation was shown in \cite{renner2008security} that:
\begin{equation}\label{eq:true-ran-bit}
\norm{\sigma_{YE} - I_{Y}/2^{\ell}\otimes \sigma_{E}} \leq 2^{-\frac{1}{2}(H_{\infty}^{\epsilon}(B|E)_{\rho} - \ell) + 2\epsilon.}
\end{equation}

\subsection{Memory-Based Quantum Walk Models}
Several memory-based quantum walk models have been introduced, including the quantum walk with memory \cite{brun2003quantum, faj_2004, mcgettrick2009one, mcgettrick2014cycle}, the non-repeating quantum walk \cite{proctor2014non}, and the quantum walk with recycled coins \cite{rohde2013quantum}. We briefly review the quantum walk with recycled coins, called the history-dependent quantum walk (HD-QW), with additional memory space to store the coin-flip history \cite{rohde2013quantum}. The process involves a Hilbert space $\mathcal{H}_{W} = \mathcal{H}_{P}\otimes \mathcal{H}_{C_{0}}\otimes\cdots \otimes \mathcal{H}_{C_{\kappa-1}}$ where $\mathcal{H}_{P}$ is the $P$-dimensional position space, $\mathcal{H}_{C_{i}}$ is the 2-dimensional coin space, and $\kappa$ is the number of (all) recycled coins (memory coins and an active coin) in the HD-QW, where $\kappa\geq 2$ and $P\in \mathbb{N}$. Note that if $\kappa = 1$, then the quantum walk is a non-history-dependent quantum walk state. The sub-space $\mathcal{H}_{M} = \mathcal{H}_{C_{0}}\otimes \cdots \otimes \mathcal{H}_{C_{{\kappa}-2}}$ can be considered as the memory coin space that keeps track of the coin-flip history. So, the Hilbert space can be written as $\mathcal{H}_{W} = \mathcal{H}_{P}\otimes \mathcal{H}_{M}\otimes \mathcal{H}_{C},$ where $\mathcal{H}_{C} = \mathcal{H}_{c_{\kappa-1}}$, and the dimension of the walker's space $|W|$ is $2^{\kappa}\cdot P$. The walker begins at some initial position $\ket{x, c_{0},...,c_{\kappa-1}}$, e.g., $\ket{0,0,...,0}$ from which a walk operator $W$ is applied $T$ times for walker's propagation. The evolution of the walk is decomposed into three stages by adding a memory operator $M$ as follows:
\begin{align*}
C&: \ket{x, c_0,...,c_{\kappa-1}} \to \sum_{c'_{\kappa-1}}\alpha_{c'_{\kappa-1}}^{(x)}\ket{x,c_0,...,c_{{\kappa}-2}, c'_{\kappa-1}},\\
S&: \ket{x, c_0,...,c_{\kappa-1}} \to \ket{x+(-1)^{c_{\kappa-1}}, c_0,...,c_{{\kappa}-2}, c_{\kappa-1}},\\[4pt]
M&: \ket{x, c_0,...,c_{\kappa-1}} \to \ket{x, c_{\kappa-1}, c_0,...,c_{{\kappa}-2}},							
\end{align*}
where $c_i \in\{0,1\}$ and all arithmetic in the position space is computed module $P$. Let $W = M\cdot S \cdot C$ be the walk operator, where $C=I_{P} \otimes I_{M} \otimes H_{C}$. The identity matrix $I_P$ enacts on walker's position $\ket{x}$ and $I_M$ is the identity matrix on walker's memory coins $\ket{c_{0},...,c_{\kappa-2}}$. And suppose the coin operator $H_{C}$ that applies on an active coin $\ket{c_{\kappa-1}}$ is the Hadamard operator.
After $T$ steps, the walker evolves to state $W^{T}\ket{x,c_0,...,c_{\kappa-1}}$. At this point, a measurement may be done on the position and the coin spaces, causing a collapse at one of the $P$ and $\kappa$ spots. We denote by $\ket{w_{x,c_{\kappa}}}$ to mean the evolved state $W^{T}\ket{x,c_0,...,c_{\kappa-1}}$. We also use $\ket{w_{i}}$ when appropriate, using the natural relationship of tuples $(x,c_{0},...,c_{\kappa-1})$ to integers $i$, with $(0,0,...,0)$ being the first index $i = 0$. 

Given an HD-QW state $\ket{w_{i}}$, where $i\in\{0,...,2^{\kappa}P-1\}$, we use the notation $\mathbb{P}(\ket{w_{i}} \to (x,c_{\kappa})) = \bra{w_{i}}[x]\otimes [c_{\kappa}]\ket{w_{i}}$ to denote the probability that the walker is observed at a point of the position $x\in \{0,...,P-1\}$ and coins $c_{\kappa} = c_{0},...,c_{\kappa-1}\in\{0,1\}^{\kappa}$ after we measure the state in a position and all recycled coins. The maximum probability is defined as: $\max_{x,c_{\kappa}}\mathbb{P}(\ket{w_{i}} \to (x,c_{\kappa}))$. Since the walker's operation $W$ along with the number of steps $T$, this maximum probability can be optimized as a function of walk parameters:
\begin{equation}\label{eq:max-prob-00}
G(\kappa, P)= \min_{t}\max_{x,c_{\kappa}}\mathbb{P}(\ket{w_{i}} \to (x,c_{\kappa})),
\end{equation}
where $t\in T$. When we measure the state in a position $x$ and only memory coins $c_{\mu} = c_{0},...,c_{\kappa-2}\in \{0,1\}^{\mu}$, its probability is denoted as: $\mathbb{P}(\ket{w_{i}} \to (x,c_{\mu})) = \bra{w_{i}}[x]\otimes [c_{\mu}]\otimes I_{c_{\alpha}}\ket{w_{i}}$, where an identity operator $I_{c_{\alpha}}$ on an active coin $\ket{c_{\kappa-1}}\in \mathcal{H}_{C}$. The maximum probability function is defined as follows:
\begin{equation}\label{eq:max-prob-01}
G'(\kappa, P)= \min_{t}\max_{x,c_{\mu}}\mathbb{P}(\ket{w_{i}} \to (x,c_{\mu})),
\end{equation}
where $t\in T$. Note that when the number of recycled coins of the HD-QW state is $\kappa=1$, that is, there is no memory coin space, so the probability is defined as: $\mathbb{P}(\ket{w_{i}} \to x) = \bra{w_{i}}[x]\otimes I_{c_{\alpha}}\ket{w_{i}}$. So the function of the maximum probability is $G'(\kappa, P)= \min_{t}\max_{x}\mathbb{P}(\ket{w_{i}} \to x).$ When we only measure the state in a position $x$, then the probability is that the walker is observed at position $x$ after measurement, namely $\mathbb{P}(\ket{w_{i}}\to x) = \bra{w_{i}}[x]\otimes I_{c_{\kappa}}\ket{w_{i}}$, where $I_{c_{\kappa}}$ is an identity matrix on all recycled coins $\ket{c_{0},...,c_{\kappa-1}}\in\mathcal{H}_{M}\otimes \mathcal{H}_{C}$. The maximum probability function is defined as follows:
\begin{equation}\label{eq:max-prob-02}
G''(\kappa, P)= \min_{t}\max_{x}\mathbb{P}(\ket{w_{i}} \to x),
\end{equation}
where $t\in T$. To optimize further the function of the maximum probability, we also evaluate the protocols with the general form of the coin-rotation operator from \cite{vlachou2018quantum}, that is:,
\begin{align}\label{eq:general-coin-op}
H'_{C}(\theta, \phi) &= \begin{bmatrix}
       										  e^{i\phi}\cos(\theta) & e^{i\phi}\sin(\theta) \\[0.3em]
       									      -e^{-i\phi}\sin(\theta) & e^{-i\phi} \cos(\theta)
     			 						 \end{bmatrix},
\end{align}
where $\theta$ and $\phi$ is chosen by a user such that $\theta, \phi \in \{g\pi/ R \;|\; g = 0,1,...,R\}$ and $R\in\mathbb{Z}_{>0}$ so that the evolution operator $W = M\cdot S\cdot H'_{C}$. Moreover, a user employs a flip-coin operator $f_{c_{\alpha}}$ that is an operator acting only on the active coin $\ket{c_{\kappa-1}}$, which is to ``flip" the last coin at some initial state before evolving the walk. The flip-coin operators are as follows:
\begin{align}\label{eq:flip-coin-ops}
I = \begin{bmatrix}
       			1 & 0 \\[0.3em]
       			0 & 1
     	    \end{bmatrix}, 
X = \frac{1}{\sqrt{2}}\begin{bmatrix}
       										  1 & 1 \\[0.3em]
       									      1 & -1
     			 						   \end{bmatrix} \text{, }
Y = \frac{1}{\sqrt{2}}\begin{bmatrix}
       										  1 & 1 \\[0.3em]
       									      i & -i
     			 						   \end{bmatrix}.
\end{align}
The set of flip-coin operators is $F = \{I, X, Y\}$.
So the quantum walker state with the above descriptions is as follows: for any time $T\in \mathbb{N}$ and any initial state $\ket{x,c_0,...,c_{\kappa-1}}$, 
$\ket{w_{i}} = W^{T}(I_P\otimes I_{c_{\mu}} \otimes f_{c_{\alpha}}\ket{x,c_0,...,c_{\kappa-1}},$ where $f_{c_{\alpha}}\in F$ acts on the active coin (when the context is clear, we forgo the subscript of $f_{c_{\alpha}}$ to $f$). When the memory-based QW-QRNG employs these generalized and flip-coin operators (\ref{eq:general-coin-op}) and (\ref{eq:flip-coin-ops}), the maximum probability (\ref{eq:max-prob-00}) is redefined as follows:
\begin{equation}\label{eq:general-max-prob-00}
G(\kappa, P, F)= \min_{t,f,\theta,\phi}\max_{x,c_{\kappa}}\mathbb{P}(\ket{w_{i}} \to (x,c_{\kappa})).
\end{equation}
Also, we define the maximum probability (\ref{eq:max-prob-01}) is as follows:
\begin{equation}\label{eq:general-max-prob-01}
G'(\kappa, P, F)= \min_{t,f,\theta,\phi}\max_{x,c_{\mu}}\mathbb{P}(\ket{w_{i}} \to (x,c_{\mu})).
\end{equation}
Similar to the function (\ref{eq:max-prob-01}), when the number of coins of the state is $\kappa=1$, the function (\ref{eq:general-max-prob-01}) is defined as: $G'(\kappa, P, F) = \min_{t,f,\theta,\phi}\max_{x}\mathbb{P}(\ket{w_{i}} \to x)$. Lastly, the function of maximum probability (\ref{eq:max-prob-02}) is described as follows:
\begin{equation}\label{eq:general-max-prob-02}
G''(\kappa, P, F)= \min_{t,f,\theta,\phi}\max_{x}\mathbb{P}(\ket{w_{i}} \to x).
\end{equation}

\subsection{Quantum Sampling}
In 2010, Bouman and Fehr introduced a novel quantum sampling technique \cite{quantum_sampling}. They discovered an interesting connection relating classical sampling strategies with quantum ones, even when the quantum state is entangled with an environmental system such as an adversary. We review its concepts in this section, but we guide a reader to look through \cite{quantum_sampling} for more details. 

Suppose a string $q\in \mathcal{A}_d^N$. A classical sampling method is a procedure of selecting a random subset $t \subset [N]$ to observe $q_t$ that is the subset of the string $q$ indexed by $t$. Then it evaluates a target value of the undiscovered part, where a user can define the target function such as the Hamming weight in the unobserved portion \cite{quantum_sampling}. In the sampling strategy, we will utilize sets of selecting a subset $t$ of size $m \le N/2$ uniformly at random to have a random subset $q_t$ and an output $w(q_t)$ as an estimate of the Hamming weight in the unobserved portion. The result was presented in \cite{quantum_sampling} that, for $\delta > 0$:
\begin{equation}\label{eq:err-cl}
\varepsilon_\delta^{cl} := \max_{q\in\mathcal{A}_d^N}\mathbb{P}(q \not\in \mathcal{B}_{t,\delta}) \le 2\exp\left(\frac{-\delta^2m(n+m)}{m+n+2}\right),
\end{equation}
where the probability is over all possible selections of subsets $t$ and $\mathcal{B}_{t,\delta}$ is the group of all ``good'' words such that this sampling strategy is to almost likely produce a $\delta$-close estimate of the Hamming weight of the unobserved portion, namely:
\[
\mathcal{B}_{t,\delta} = \{q\in\mathcal{A}_{d}^{N} : |w(q_t) - w(q_{\bar{t}})| \leq \delta\}.
\]
The error probability of the classical sampling strategy is the value $\varepsilon_\delta^{cl}$, where the ``cl'' superscript means a classical sampling strategy.

In \cite{quantum_sampling}, Bouman and Fehr show the failure probabilities of the quantum sampling strategy are functions of the classical error probability. Let a basis be $\{\ket{0},..., \ket{d-1}\}$. Precisely, the choice may be arbitrary, but it is then fixed; when we use this result, the basis will be the walk basis $\{W^T\ket{x, c_{\kappa}}\}_{x,c_{\kappa}} = \{\ket{w_{i}}\}_{i}$. Define the quantum analogue of the ``good collection" of classical words as follows \cite{quantum_sampling}:
\begin{equation*}
span(\mathcal{B}_{t,\delta}) = span\{\ket{w_{i_{1}},...,w_{i_{N}}}: |w(i_{t}) - w(i_{\bar{t}})| \leq \delta\}.
\end{equation*}

Consider that if given a state $\ket{\varphi}_{AE} \in span(\mathcal{B}_{t,\delta})\otimes \mathcal{H}_E$, then a measurement in the given basis performing on those qudits indexed by $t$ leads to outcome $q\in\mathcal{A}_d^m$. Interestingly, it must keep the fact that the remaining state is a superposition of the form:
$\ket{\varphi_{t,q}} = \sum_{i\in J}\alpha_i\ket{w_{i},E_{i}},$
where $J\subset \{i \in \mathcal{A}_d^{N-m} : |w(i) - w(q)| \leq \delta\}$. The first time, Bounman and Fehr introduce the superposition lemma in the quantum sampling paper \cite{quantum_sampling}. It allows computing the lower bound of entropy of $Z$ given $E$ by using a mixed quantum state, that is, quantifying how much information an adversary $E$, prepares an entangled quantum state $\ket{\psi}_{AE}$ and sends $A$ portion to Alice, has on measuring the state by Alice. The superposition lemma is as follows:

\begin{lemma}
Let $\ket{\psi}_{AE}$ and $\rho_{AE}^{\text{mix}}$ be of the form:
\begin{equation*}
\ket{\psi}_{AE} = \sum_{i\in J} \alpha_{i} \ket{i}_W \otimes \ket{E_{i}} \text{ and }
\rho_{AE}^{\text{mix}}  = \sum_{i \in J} |\alpha_{i}|^{2}[i]_{W}\otimes [E_{i}],
\end{equation*}
where $\ket{i}_W = \ket{i_1,...,i_n}_W$ is the $n$-tensors of HD-QW states, $J = \{i\in \mathcal{A}_{2^{\kappa}P}^{n}: |w(i) - w(q)|\leq \delta\}$ and $n = N-m$ with $m  = |q|$, $q\in\{0,1\}^{m}$ and $N$ is the total number of signals from the adversarial source $E$ to Alice $A$. Let $\rho_{ZE}$ and $\rho_{ZE}^{\text{mix}} =\chi_{ZE}$ describe the hybrid systems obtained by measuring subsystem $A$ of $\ket{\psi}_{AE}$ and $\rho_{AE}^{\text{mix}}$, respectively in basis $\{\ket{z}\}_{z}$ and tracing out a coin subspace out of all coin spaces, respectively. Then we have that:
\begin{equation}\label{eq:ent-super-inequality}
H_{\infty}(Z|E)_{\rho} \geq H_{\infty}(Z|E)_{\chi} - \log_{2} |J|.
\end{equation}
\end{lemma}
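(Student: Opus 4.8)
The plan is to reduce the entropic inequality (\ref{eq:ent-super-inequality}) to a single operator inequality between the pure-state density operator $[\psi]_{AE}=\ketbra{\psi}$ and the decohered operator $\rho_{AE}^{\text{mix}}$, and then to propagate that inequality first through the measure-and-trace channel and finally through the semidefinite-program definition of the conditional min entropy recalled in the preliminaries. Writing $\Lambda$ for the completely positive, trace-preserving map on the $A$ register that measures in the basis $\{\ket{z}\}_z$, records the outcome classically, and traces out the designated coin subspace, we have by definition $\rho_{ZE}=(\Lambda\otimes I_E)([\psi])$ and $\chi_{ZE}=(\Lambda\otimes I_E)(\rho_{AE}^{\text{mix}})$. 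Thus it suffices to control how much $[\psi]$ can exceed $\rho_{AE}^{\text{mix}}$ \emph{before} any measurement is applied.

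The key step is to recognize $\rho_{AE}^{\text{mix}}$ as the pinching of $[\psi]$ in the walk basis on the $A$ register. Indeed, let $P_i=[i]_W\otimes I_E$ for $i\in J$; these are mutually orthogonal projectors, and because the states $\ket{i}_W$ are orthonormal the off-diagonal terms $\alpha_i\bar\alpha_j\ket{i}\bra{j}\otimes\ket{E_i}\bra{E_j}$ are annihilated whenever $i\neq j$, so that $\sum_{i\in J}P_i[\psi]P_i=\sum_{i\in J}|\alpha_i|^2[i]_W\otimes[E_i]=\rho_{AE}^{\text{mix}}$ exactly, regardless of whether the environment states $\ket{E_i}$ are orthogonal. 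Since $\ket{\psi}$ is supported on $\mathrm{span}\{\ket{i}_W:i\in J\}\otimes\mathcal{H}_E$, only $|J|$ of the pinching projectors act nontrivially, and the standard pinching inequality then gives the operator bound $[\psi]\leq |J|\cdot\rho_{AE}^{\text{mix}}$.

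From here I would finish mechanically. Applying the positive map $\Lambda\otimes I_E$ to both sides of this operator inequality preserves it, yielding $\rho_{ZE}\leq |J|\cdot\chi_{ZE}$. Now let $\lambda=H_{\infty}(Z|E)_{\chi}$, so that some density operator $\sigma_E$ witnesses $\chi_{ZE}\leq 2^{-\lambda}\,I_Z\otimes\sigma_E$. Chaining the two bounds gives $\rho_{ZE}\leq |J|\,2^{-\lambda}\,I_Z\otimes\sigma_E=2^{-(\lambda-\log_2|J|)}\,I_Z\otimes\sigma_E$, which exhibits $\sigma_E$ as a feasible point in the supremum defining $H_{\infty}(Z|E)_{\rho}$ with parameter $\lambda-\log_2|J|$; hence $H_{\infty}(Z|E)_{\rho}\geq H_{\infty}(Z|E)_{\chi}-\log_2|J|$, as claimed.

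The only genuinely delicate point is the pinching step: one must check that the decoherence defining $\rho_{AE}^{\text{mix}}$ is exactly a complete von Neumann pinching in an orthonormal basis on $A$ alone, so that the cross terms vanish for free and the count of active projectors is precisely $|J|$ rather than the full dimension of the $n$-fold walk space, since it is this count that fixes the $\log_2|J|$ penalty. Everything after that---preservation of operator inequalities under the completely positive channel $\Lambda\otimes I_E$, and the translation into min entropy via its semidefinite characterization---is routine and uses only the definitions already stated above.
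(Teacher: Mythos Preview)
Your argument is correct. The paper does not supply its own proof of this lemma; it simply attributes the statement to Bouman and Fehr's quantum sampling work and then invokes it inside the proof of Theorem~3. Your route---recognizing $\rho_{AE}^{\text{mix}}$ as the walk-basis pinching of $[\psi]$ on the $A$ register, using that only $|J|$ projectors act nontrivially to obtain the operator bound $[\psi]\le |J|\,\rho_{AE}^{\text{mix}}$, pushing this through the positive map $\Lambda\otimes I_E$, and reading off the min-entropy loss from the semidefinite characterization---is exactly the standard clean proof of this type of superposition-versus-mixture bound, and is essentially the mechanism behind the original Bouman--Fehr lemma. The one point you flagged as delicate (that the count is $|J|$ rather than the full ambient dimension because $\ket{\psi}$ is supported on $\mathrm{span}\{\ket{i}_W:i\in J\}\otimes\mathcal{H}_E$) is handled correctly: the pinching inequality applies on that subspace with $|J|$ projectors, and both operators vanish on its orthogonal complement, so the bound extends globally.
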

The quantum sampling main result \cite{quantum_sampling} translated for our application as follows:
\begin{theorem}
Let  $\delta > 0$. Given the classical sampling strategy and an arbitrary quantum state $\ket{\psi}_{AE}$, there exists a collection of ``ideal state" $\{\ket{\varphi_{t}}\}_{t}$, indexed over all possible  subsets the sampling strategy may choose, such that each $\ket{\varphi_{t}}\in span(\mathcal{B}_{t,q})\otimes \mathcal{H}_E$ and:
\begin{equation}
\frac{1}{2}\norm{\frac{1}{T}\sum_{t} [t]\otimes [\psi] - \frac{1}{T}\sum_{t}[t]\otimes [\varphi_{t}] } \leq \sqrt{\varepsilon_{\delta}^{cl}},
\end{equation}
where $T = {N \choose m}$ and the sum is over all subsets of size $m$.
\end{theorem}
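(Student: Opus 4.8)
The plan is to follow the strategy of Bouman and Fehr~\cite{quantum_sampling}, specializing each step to the walk basis $\{\ket{w_i}\}_i = \{W^T\ket{x,c_{\kappa}}\}_{x,c_{\kappa}}$. The first fact I would record is that this basis is genuinely orthonormal: each of $C$, $S$, and $M$ is unitary, and the generalized-coin variant $H'_C(\theta,\phi)$ is unitary as well (a one-line check of $H'_C (H'_C)^{\dagger} = I$), so $W^T$ is unitary and the $\ket{w_i}$ form an orthonormal basis of the $N$-fold tensor space. This orthonormality is exactly what makes the projector onto $span(\mathcal{B}_{t,\delta})$ well-defined and lets the cross terms vanish later.

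For each subset $t$, let $\Pi_t = P_t \otimes I_E$, where $P_t$ is the orthogonal projector onto $span(\mathcal{B}_{t,\delta}) = span\{\ket{w_{i_1},\dots,w_{i_N}} : \abs{w(i_t) - w(i_{\bar t})} \le \delta\}$. I would take the ideal state to be the normalized projection $\ket{\varphi_t} = \Pi_t\ket{\psi}/\norm{\Pi_t\ket{\psi}}$, which by construction lies in $span(\mathcal{B}_{t,\delta})\otimes\mathcal{H}_E$, as required. (If $\Pi_t\ket{\psi}=0$ one picks any unit vector in the good subspace; that case contributes nothing to the bound below.) Because $\Pi_t$ is a projector, $\abs{\braket{\psi}{\varphi_t}} = \norm{\Pi_t\ket{\psi}}$ and hence $1 - \abs{\braket{\psi}{\varphi_t}}^2 = \norm{(I-\Pi_t)\ket{\psi}}^2$ by the Pythagorean identity.

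Next I would evaluate the target trace distance. Since $\frac{1}{T}\sum_t [t]\otimes[\psi]$ and $\frac{1}{T}\sum_t[t]\otimes[\varphi_t]$ are block-diagonal in the orthonormal $\ket{t}$ register, their difference is block-diagonal and the joint trace distance splits as $\frac{1}{T}\sum_t \tfrac{1}{2}\norm{[\psi]-[\varphi_t]}$. For the pure states $[\psi]$ and $[\varphi_t]$ this equals $\frac{1}{T}\sum_t \sqrt{1 - \abs{\braket{\psi}{\varphi_t}}^2} = \frac{1}{T}\sum_t \norm{(I-\Pi_t)\ket{\psi}}$ by the previous paragraph. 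Applying Jensen's inequality (concavity of the square root) then moves the average inside, bounding the expression by $\sqrt{\frac{1}{T}\sum_t \norm{(I-\Pi_t)\ket{\psi}}^2}$.

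The crux, and the step I expect to be the main obstacle, is bounding the averaged ``bad-subspace weight'' $\frac{1}{T}\sum_t \norm{(I-\Pi_t)\ket{\psi}}^2$ by the classical error probability $\varepsilon_\delta^{cl}$; this is the point where the classical-to-quantum transfer of sampling guarantees actually happens. Here I would write $\ket{\psi} = \sum_i \ket{w_i}\otimes\ket{\tilde E_i}$ with subnormalized $\ket{\tilde E_i}\in\mathcal{H}_E$, so that orthonormality of $\{\ket{w_i}\}$ gives $\norm{(I-\Pi_t)\ket{\psi}}^2 = \sum_{i\notin\mathcal{B}_{t,\delta}} p_i$ with $p_i = \braket{\tilde E_i}{\tilde E_i}$ and $\sum_i p_i = 1$. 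Exchanging the two sums yields $\sum_i p_i \cdot \frac{1}{T}\sum_t \mathbf{1}\{i\notin\mathcal{B}_{t,\delta}\} = \sum_i p_i\,\mathbb{P}_t(i\notin\mathcal{B}_{t,\delta})$, and each inner probability is at most $\max_{q}\mathbb{P}(q\notin\mathcal{B}_{t,\delta}) = \varepsilon_\delta^{cl}$ by the very definition of the classical error. Since the $p_i$ sum to one, the whole average is $\le\varepsilon_\delta^{cl}$, and combining with the Jensen step delivers the claimed bound $\sqrt{\varepsilon_\delta^{cl}}$. The delicate point throughout is keeping the environment register inert: the projector acts only on the $A$ side, and the vanishing of the cross terms $\braket{w_i}{w_j}=\delta_{ij}$ rests entirely on the orthonormality established at the outset.
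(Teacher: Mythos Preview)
The paper does not actually supply a proof of this theorem: it is quoted verbatim as ``the quantum sampling main result~\cite{quantum_sampling} translated for our application,'' with the proof deferred entirely to Bouman and Fehr. Your proposal correctly reconstructs that standard argument---defining $\ket{\varphi_t}$ as the normalized projection onto $span(\mathcal{B}_{t,\delta})\otimes\mathcal{H}_E$, splitting the trace distance block-diagonally over $t$, invoking the pure-state fidelity formula, applying Jensen, and then expanding in the walk basis to reduce the averaged bad-subspace weight to the classical error $\varepsilon_\delta^{cl}$---so there is nothing further to compare.
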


\section{Our Protocol}
We build a memory-based QW-QRNG protocol based on the QW-QRNG protocol introduced in \cite{bae2021quantum}. Our protocol is to generate a secure string of true random bits by using an HD-QW state \cite{rohde2013quantum}. Without loss of generality, the randomness source, possibly controlled by an adversary, prepares the $N$-walkers and sends them to Alice. If a source is honest, then it should prepare the state $\ket{w_{0}}^{\otimes N} = \big(W^{T}\ket{0,0,...,0}\big)^{\otimes N}$, independent of $\mathcal{H}_{E}$ and send it to Alice, but the adversarial source may prepare anything. Also, there are no assumptions on this state's overall structure beyond that it consists of the $N$-walkers, and it may even be non-i.i.d. The goal of the memory-based QW-QRNG is to produce a uniformly random string, independent of any adversary's system. The general memory-based QW-QRNG protocol is as follows:\\
\textbf{Public Parameters:} The quantum walk setting includes the dimension of the position space $P$, the dimension of the coin space $c_i$, the $\kappa$-number of (all) recycled coins, the walker's unitary operator $W$, and the number of steps to evolve by $T$. The walker exploits the $\mu=\kappa -1$ number of memory coins $\ket{c_{\mu}}=\ket{c_{0},...,c_{\kappa - 2}}$ as a memory of the coin evolution. When a user employs the generalized coin operator (\ref{eq:general-coin-op}), an angle $\theta$ and a phase $\phi$ are public parameters, otherwise the Hadamard coin operator used. Also, if the protocol exploits the set of flip-coin operators, $F$, (\ref{eq:flip-coin-ops}) that each operator flips the (last) active coin at some walker's initial state before the evolution, the set $F$ is to be known to the public.\\
\textbf{Source:} An untrusted source, possibly adversarial, produces an HD-QW state. $\ket{\psi_{0}} \in \mathcal{H}_{A} \otimes \mathcal{H}_{E},$ where $\mathcal{H}_{A} \cong \mathcal{H}_{W}^{\otimes N}$. If the source is honest, the state prepared should be of the form $\ket{\psi_{0}} = \ket{w_{0}}^{\otimes N} \otimes \ket{0}_{E},$ namely, $N$-copies of the walker state $\ket{w_{0}} = W^{T}\big(\ket{0}\otimes \ket{0,...,0}\big)$ unentangled  with Eve $E$. Note that Alice can define which state is a honest state $\ket{w_{0}}$.\\
\textbf{User:} Alice chooses a random subset $t \subset [N]$ of size $m$ and measures the systems indexed by this subset using POVM $\mathcal{W} = \big\{[w_{0}], I-[w_{0}]\big\} = \{W_{0}, W_{1}\}$ resulting in outcomes $q\in\{0,1\}^{m}$. She measures the remaining $n$-walker systems in a basis of measurements $\mathcal{Z} $, where $n=N-m$. The first outcome is used to test the fidelity of the received state while the second is used as a raw-random string $r\in \mathcal{A}_{|\mathcal{Z}|}^{n}$. \\
\textbf{Postprocessing: } Alice applies privacy amplification to $r$, producing a final random string of size $\ell$. As proven in \cite{frauchiger2013true}, the hash function used for privacy amplification need only be chosen randomly once and then reused for each run of the protocol for a QRNG protocol of this nature.

There are several cases of considered protocols for the memory-based QW-QRNG as follows. For each case, most settings are similar to the general protocol above, but a set of measurement operators $\mathcal{Z}$ in a randomness extraction mode is differently defined as follows.
\subsection{The Using All Case: Using Memory and Active Coins}
The using all case that the memory-based QW-QRNG protocol utilizes all memory and active coins to produce a string of true random bits. A measurement in an extraction mode is as follows. After testing the fidelity, a user, Alice, measures the remaining $n$-walker systems in orthonormal (computational) bases of $\mathcal{H}_{2^{\kappa}P}$. The set of measurement operators is defined as follows: for all orthonormal bases $\ket{z_{i}}\in\mathcal{H}_{W}$,
\begin{equation}
\mathcal{Z} = \big\{[z_{i}]\big\}_{i=0}^{d-1} = \big\{Z_{i}\big\}_{i=0}^{d-1},
\end{equation}
where $d=2^{\kappa}P$. 
\subsection{The Using Memory Case: Only Using Memory Coins}
The protocol that generates a string of random bits with only using the $\mu$ number of memory coins $\ket{c_{\mu}}=\ket{c_{0},...,c_{\kappa-2}}$ to generate a string of true random bits. After the testing the fidelity, a user, Alice, measures the remaining $n$-walker systems in a following POVM measurement:
\begin{equation}\label{eq:povm-z-01}
\mathcal{Z}' = \big\{[j]\otimes [c_{\mu}]\otimes I_{c_{\alpha}}\big\}_{j, c_{\mu}} = \big\{ Z_{j,c_{\mu}}\big\}_{j, c_{\mu}} = \{Z'_{i}\}_{i=0}^{d-1},
\end{equation}
where $j=0,...,P-1$, $c_{\mu} = c_{0},...,c_{\kappa-2} \in \{0,1\}^{\mu}$, $I_{c_{\alpha}}$ is an identity matrix on the active coin $\ket{c_{k-1}}$, and $d = 2^{\mu}P$.

\subsection{The Not Using Memory Case: Not Using Memory and Active Coins}

The protocol that generates a string of random bits without using any coins of each quantum walker to generate true random bits sequence. After the testing the fidelity, a user, Alice, measures the remaining $n$-walker systems in a following POVM measurement:
\begin{equation}\label{eq:povm-z-02}
\mathcal{Z}'' = \big\{[j]\otimes I_{c_{\mu}}\otimes I_{c_{\alpha}}\big\}_{j=0}^{P-1} = \{Z_{j}\}_{j=0}^{P-1} = \{Z''_{i}\}_{i=0}^{d-1}, 
\end{equation}
where $I_{c_{\mu}}$ is the identity matrix on the memory coins, $I_{c_{\alpha}}$ is the identity matrix on the active coin, and $d = P$.

For all cases, after the extraction, the post-processing to produce true random bits is as same as the above description. The security of the using all case protocol can be proven using the original sampling-based entropic uncertainty relations \cite{krawec2020new}. But the new sampling-based entropic uncertainty relations stemmed from \cite{bae2021quantum} can prove the security of the using and non-using memory cases of the protocols.

\section{Security Analysis}

This section mainly is to show how the protocols of the memory-based QW-QRNG produce a secure string of true random bits. It requires a bound on the quantum min-entropy, using Eq. (\ref{eq:true-ran-bit}), from the number of random bits $\ell$ that may be extracted from the $N$-numbers of HD-QW states prepared by an adversarial source.  Without loss of generality, the adversarial source is empowered to create any initial state, possibly entangled with her ancilla. However as in \cite{vallone2014quantum}, the dimension of the system sent to Alice is known; in our case it is $(2^{\kappa}P)^{N}$, that is, $N$ quantum walker states, each of dimension $2^{\kappa}P$. We consider that each of the $N$-walkers is a possibly different state, which scenario models natural noise and considers an adversarial source. Lastly, Alice's measurement devices are fully-characterized, meaning that her measurement devices are in the trust.

\subsection{The Using All Case}
For the using all case, we can show the security trough the result in \cite{krawec2020new}. The goal is to show the protocol's produced random string is uniformly random and independent of any adversary's system. We can evaluate how many uniform independent random numbers the using all protocol extracts from the HD-QW states using the min-entropy. To evaluate the min-entropy, we exploit Theorem 2 in \cite{krawec2020new} that is as follows:


\begin{theorem}
Let $\varepsilon > 0$, $0<\beta < 1/2$, and $\rho_{AE}$ an arbitrary quantum state acting on $\mathcal{H}_{A}\otimes \mathcal{H}_{E}$, where $\mathcal{H}_{A}\cong \mathcal{H}_{d}^{\otimes(n+m)}$ for $d\geq 2$ and $m<n$. Let $Z =\{\ket{z_{i}}\}_{i=0}^{d-1}$ and $X =\{\ket{x_{i}}\}_{i=0}^{d-1}$ be two orthonormal bases of $\mathcal{H}_{d}$ and $\Lambda$ be two outcome POVM with elements $\{\Lambda_{0} = [x_{0}], \Lambda_{1} = I - [x_{0}]\}$. If a subset $t$ of size $m$ of $\rho_{A}$ is measured using $\Lambda$ resulting in outcome $q$ we denote by $\rho(t,q)$ to be the post-measurement state. Then it holds that:
\begin{equation}
\mathbb{P}\bigg(H_{\infty}^{\varepsilon'}(Z|E)_{\rho(t,q)}+\frac{n\bar{H}_{d}(w(q)+\delta)}{\log_{d}(2)}  \geq n\gamma\bigg) \geq 1 - \varepsilon'',
\end{equation}  
where the probability is over the choice of subset $t$ and the measurement outcome $q$. Above:
\begin{equation}\label{eq:max-overlap}
\gamma = -\log_{2}\max_{a,b\in\mathcal{A}_{d}}|\braket{z_{a}}{x_b}|^{2},
\end{equation}
and $\varepsilon' = 4\varepsilon + 2\varepsilon^{\beta}$, $\varepsilon'' = 2\varepsilon^{1-2\beta}$ and finally:
\begin{equation*}
\delta = \sqrt{\frac{(m+n+2)\ln (2/\varepsilon^{2})}{m(m+n)}}.
\end{equation*}
\end{theorem}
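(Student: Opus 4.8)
The plan is to prove this sampling-based entropic uncertainty relation by chaining the quantum sampling result (Theorem 1), the superposition lemma (Lemma 1), and a single-qudit overlap bound, and then transferring the resulting estimate from the ideal states to the actual state $\rho_{AE}$ via a smoothing-plus-Markov argument. First I would fix a purification $\ket{\psi}_{AE}$ of $\rho_{AE}$ and invoke Theorem 1 with the sampling basis taken to be the coarse-grained $X$-basis used by $\Lambda$ (recording, for each tested qudit, whether the outcome is $\ket{x_0}$ or not, so that the observed relative Hamming weight is exactly $w(q)$). This produces a family of ideal states $\ket{\varphi_t}\in span(\mathcal{B}_{t,\delta})\otimes\mathcal{H}_E$ whose averaged trace distance to $\ket{\psi}$ is at most $\sqrt{\varepsilon_{\delta}^{cl}}$. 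The stated choice of $\delta$ is precisely what forces $\varepsilon_{\delta}^{cl}\le\varepsilon^2$ in Eq. (\ref{eq:err-cl}), i.e. the averaged distance is at most $\varepsilon$; I would verify this substitution at the outset, since it pins down every subsequent constant.

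Next I would analyze a single fixed ideal state. Measuring the $t$-indexed qudits of $\ket{\varphi_t}$ with $\Lambda$ and obtaining outcome $q$ collapses the untested register into a superposition $\ket{\varphi_{t,q}}=\sum_{i\in J}\alpha_i\ket{w_i,E_i}$ supported on $J=\{i:|w(i)-w(q)|\le\delta\}$, which is the superposition structure guaranteed for states in $span(\mathcal{B}_{t,\delta})$. The size of $J$ is the volume of a $d$-ary Hamming ball of relative radius $w(q)+\delta$, so I would apply the standard estimate $\log_d|J|\le n\bar{H}_d(w(q)+\delta)$, equivalently $\log_2|J|\le n\bar{H}_d(w(q)+\delta)/\log_d(2)$; here the extended entropy function $\bar{H}_d$ is exactly what keeps the bound valid when $w(q)+\delta$ exceeds $1-1/d$. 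The superposition lemma (Eq. (\ref{eq:ent-super-inequality})) applied to the $Z$-basis measurement then gives $H_{\infty}(Z|E)_{\rho_{t,q}}\ge H_{\infty}(Z|E)_{\chi_{t,q}}-\log_2|J|$, where $\chi$ is the measured mixed state. For $\chi$ I would bound the min-entropy directly: each branch $[i]_W$ is a product $X$-basis state, so measuring it in $Z$ yields any outcome with probability $\prod_{k}|\braket{z_{j_k}}{x_{i_k}}|^2\le 2^{-n\gamma}$ by the definition of $\gamma$ in Eq. (\ref{eq:max-overlap}); writing $\chi_{ZE}=\sum_z[z]\otimes\sigma_E^z$ one checks $\sigma_E^z\le 2^{-n\gamma}\rho_E$ for every $z$ (with $\rho_E$ the $E$-marginal of the mixed state), and feeding $\sigma_E=\rho_E$ into the definition of conditional min-entropy gives $H_{\infty}(Z|E)_{\chi}\ge n\gamma$. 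Combining the last two facts yields the clean per-pair bound $H_{\infty}(Z|E)_{\rho_{t,q}}+n\bar{H}_d(w(q)+\delta)/\log_d(2)\ge n\gamma$ for the ideal states.

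The final and hardest step is transferring this from the ideal family to the true state and converting it into the high-probability smooth-entropy statement, which is where $\varepsilon'=4\varepsilon+2\varepsilon^\beta$ and $\varepsilon''=2\varepsilon^{1-2\beta}$ are forced. Since measurement cannot increase trace distance, the real and ideal post-measurement states are, on average over $(t,q)$, within $\varepsilon$ in trace distance; I would apply Markov's inequality with a threshold calibrated by $\varepsilon^\beta$ to conclude that, except on a set of $(t,q)$ of probability at most $\varepsilon''=2\varepsilon^{1-2\beta}$, the actual state $\rho(t,q)$ is $O(\varepsilon^\beta)$-close to its ideal counterpart. On this good set, the robustness of smooth min-entropy under trace distance carries the ideal-state bound over to $H_{\infty}^{\varepsilon'}(Z|E)_{\rho(t,q)}$ once the closeness is absorbed into $\varepsilon'$. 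The hard part will be the bookkeeping here: one must track how the single averaged bound $\sqrt{\varepsilon_{\delta}^{cl}}$ distributes across the exponentially many $(t,q)$ outcomes, and then balance the Markov exponent $\beta$ against the smoothing penalty so that both $\varepsilon'$ and $\varepsilon''$ emerge exactly as stated — in contrast to the single-state reasoning of the first two steps, which is essentially mechanical once the sampling basis and the ball-volume bound are correctly set up.
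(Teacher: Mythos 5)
Your proposal is correct and follows essentially the same route the paper relies on: note that the paper does not prove this statement itself but imports it verbatim as Theorem~2 of \cite{krawec2020new}, and your chain (quantum sampling with $\delta$ chosen so that $\varepsilon_{\delta}^{cl}\leq\varepsilon^{2}$, the superposition lemma, the Hamming-ball bound $\log_{2}|J|\leq n\bar{H}_{d}(w(q)+\delta)/\log_{d}(2)$, the overlap bound $H_{\infty}(Z|E)_{\chi}\geq n\gamma$, then a Markov-plus-smoothing transfer yielding $\varepsilon'$ and $\varepsilon''$) is exactly the strategy used there and mirrored in the paper's own proof of its Theorem~3. No gaps worth flagging.
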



In the using all case, the square of the maximum overlap in Eq. (\ref{eq:max-overlap}) is equivalent to $\max_{x,c_{\kappa}}\mathbb{P}(\ket{w_{i}}\to (x,c_{\kappa}))$. So, the gamma (\ref{eq:max-overlap}) can be defined by the function of the maximum probability (\ref{eq:max-prob-00}) as follows: given an honest HD-QW state $\ket{w_{0}}$,
\begin{equation}\label{eq:gamma-00}
\gamma = -\log_{2} G(\kappa, P).
\end{equation}
When the using all case employ the generalized and flip coin operators, the gamma (\ref{eq:max-overlap}) is defined by the maximum probability (\ref{eq:general-max-prob-00}). Then, except with the failure probability $2\varepsilon^{1-2\beta}$, we find the lower bound of the size of the random string as follows:
\begin{equation}
\ell_{ours} \geq n\Bigg(\gamma - \frac{\bar{H}_{|W|}(w(q)+\delta)}{\log_{|W|}(2)}\Bigg) - 2\log{\frac{1}{\tilde{\varepsilon}}}
\end{equation}
where $\tilde{\varepsilon} = \varepsilon_{PA} -  2\varepsilon$. A user generally runs the protocol and observes $q$ directly. But, we consider the noise follows a depolarization channel with parameter $Q$ to simulate its implementation. This noise model is a standard one to estimate in simulations. After sampling, Alice will have an expected Hamming weight in her test measurement of $w(q) = Q$. 


\subsection{The Using Memory and Not Using Memory Cases}
We follow the security analysis of the QW-QRNG protocol in \cite{bae2021quantum} to show the security of the using memory and the not using memory cases in the memory-based QW-QRNG protocol with HD-QW states. We amended Theorem 2. in \cite{bae2021quantum} for the using memory and not using memory cases. For both cases, the following main theorem exploits different post-measurement states. The details of the theorem is as follows:

\begin{theorem}
Let $\varepsilon > 0$. After executing the protocol of the using memory case and observing outcome $q$ during the test stage (namely, after measuring using $\mathcal{W}$), it holds that, except with probability at most $\varepsilon^{1/3}$ (where the probability here is over the choice of sample subset and observation $q$), the protocol outputs a final secret string of size:
\begin{equation}
\ell^{'}_{ours} = \eta_{q}\gamma' - n\cdot \frac{\bar{H}_{2^{\kappa}P}(w(q)+\delta)}{\log_{2^{\kappa}P}(2)} -  2 \log_{2}\frac{1}{\varepsilon}, 
\end{equation}
which is $(5\varepsilon  + 2\varepsilon^{1/3})$-close to an ideal random string (i.e., one that is uniformly generated and independent of  any  adversary system as in Eq. (\ref{eq:true-ran-bit})). Above, given an honest HD-QW state $\ket{w_{0}}$, the gamma $\gamma'$ of the using memory case with the maximum probability (\ref{eq:max-prob-01}) defines as follows: 
\begin{equation}\label{eq:gamma-01}
\gamma' = -\log_{2} G'(\kappa, P),
\end{equation}
the gamma $\gamma'$ of the not using memory case with the maximum probability (\ref{eq:max-prob-02}) defines as follows: 
\begin{equation}\label{eq:gamma-02}
\gamma' = -\log_{2} G''(\kappa, P),
\end{equation}
and 
$\eta_{q} = (N-m)(1-w(q)-\delta)$, where $\delta$ is:
\begin{equation}
\delta = \sqrt{\frac{(N+2)\ln (2/\varepsilon^{2})}{m\cdot N}}.
\end{equation}
Note that when the using memory and the not using memory cases exploit the generalized and flip coin operators (\ref{eq:general-coin-op}) and (\ref{eq:flip-coin-ops}), the gammas (\ref{eq:gamma-01}) and (\ref{eq:gamma-02}) employ the maximum probabilities (\ref{eq:general-max-prob-01}) and (\ref{eq:general-max-prob-02}), respectively.
\end{theorem}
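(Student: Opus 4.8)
The plan is to adapt Theorem 2 of \cite{bae2021quantum} to the partial-trace extraction measurements $\mathcal{Z}'$ and $\mathcal{Z}''$, combining the quantum sampling theorem (Theorem 1) with the superposition lemma (Lemma 1). First I would fix the adversarial state $\ket{\psi}_{AE}$ and invoke the classical strategy of drawing a size-$m$ subset $t$ and recording the relative weight $w(q)$ of the outcome under the test POVM $\mathcal{W}=\{[w_0],I-[w_0]\}$. With the stated $\delta=\sqrt{(N+2)\ln(2/\varepsilon^2)/(mN)}$, substituting into Eq.~(\ref{eq:err-cl}) with $n+m=N$ makes the exponent collapse to $-\ln(2/\varepsilon^2)$, so $\varepsilon_\delta^{cl}\le\varepsilon^2$ and the ideal states $\{\ket{\varphi_t}\}_t$ furnished by the sampling theorem sit within average trace distance $\sqrt{\varepsilon_\delta^{cl}}\le\varepsilon$ of the real ensemble. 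A Markov argument over the choice of $(t,q)$ then converts this averaged bound into a statement holding except with probability $\varepsilon^{1/3}$, at the cost of a smoothing term that will re-enter the final closeness parameter.

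For the ideal state, since $\ket{\varphi_t}\in span(\mathcal{B}_{t,\delta})\otimes\mathcal{H}_E$, measuring the test subset with $\mathcal{W}$ and recording $q$ collapses the residual system onto a superposition supported on $J=\{i\in\mathcal{A}_{2^{\kappa}P}^{n}:|w(i)-w(q)|\le\delta\}$, precisely the index set of Lemma 1. Applying the superposition lemma gives $H_\infty(Z|E)_\rho\ge H_\infty(Z|E)_\chi-\log_2|J|$, where $Z$ is the extraction outcome under $\mathcal{Z}'$ (using memory) or $\mathcal{Z}''$ (not using memory) and $\chi$ is the associated mixed state. Because $J$ is counted over the full walker alphabet of size $2^{\kappa}P$, a Hamming-ball volume bound yields $\log_2|J|\le n\,\bar{H}_{2^{\kappa}P}(w(q)+\delta)/\log_{2^{\kappa}P}(2)$; this delivers the second term of $\ell'_{ours}$ and explains why both cases carry the $2^{\kappa}P$ subscript even though $\mathcal{Z}''$ extracts on only $P$ outcomes.

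The key step is lower-bounding $H_\infty(Z|E)_\chi$. I would adjoin the classical register $I$ labelling the mixture; since this extra conditioning can only lower the min-entropy, $H_\infty(Z|E)_\chi\ge H_\infty(Z|E,I)$, and Eq.~(\ref{eq:min_min_ent}) reduces the right-hand side to $\min_{i\in J}$ of the min-entropy of measuring the product state $\ket{i}_W=\ket{w_{i_1}}\otimes\cdots\otimes\ket{w_{i_n}}$. As the extraction POVM acts coordinate-wise, its outcome distribution factorizes and min-entropy is additive across coordinates: at each coordinate with $i_j=0$ the honest walker $\ket{w_0}$ has maximal outcome probability $G'(\kappa,P)$ (resp.\ $G''(\kappa,P)$), contributing $\gamma'=-\log_2 G'$ (resp.\ $-\log_2 G''$), while coordinates with $i_j\ne0$ are bounded below by zero. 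Since every $i\in J$ obeys $w(i)\le w(q)+\delta$, at least $n(1-w(q)-\delta)=\eta_q$ coordinates are honest, giving $H_\infty(Z|E)_\chi\ge\eta_q\gamma'$, the leading term of $\ell'_{ours}$. The main obstacle I anticipate is exactly this coordinate-wise analysis under a genuine POVM: tracing out the active coin (or all coins) makes the extraction non-projective, so I must verify that the maximal outcome probability of $\ket{w_0}$ under the coarse-grained measurement coincides with $G'$ of Eq.~(\ref{eq:max-prob-01}) (resp.\ $G''$ of Eq.~(\ref{eq:max-prob-02})) and that additivity of min-entropy persists for the product POVM on the mixed state.

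Finally I would transfer the ideal-state min-entropy bound to a smooth min-entropy bound on the real state, paying the smoothing cost inherited from the Markov step, and then apply the privacy-amplification relation Eq.~(\ref{eq:true-ran-bit}) with output length $\ell'_{ours}$, the reserved $2\log_2(1/\varepsilon)$ fixing the leakage contribution to order $\varepsilon$. Collecting the sampling error $\varepsilon$, the privacy-amplification slack, and the smoothing contribution accounts for the claimed $(5\varepsilon+2\varepsilon^{1/3})$ total distance from an ideal random string, with the $\varepsilon^{1/3}$ failure probability carried over directly from the Markov bound. The not-using-memory case is structurally identical, differing only in replacing $G'$ by $G''$ (hence Eq.~(\ref{eq:gamma-01}) by Eq.~(\ref{eq:gamma-02})), and in substituting the generalized-coin maximum probabilities when the operators of Eqs.~(\ref{eq:general-coin-op}) and~(\ref{eq:flip-coin-ops}) are used.
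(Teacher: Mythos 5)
Your proposal is correct and follows essentially the same route as the paper's proof: quantum sampling to reduce to ideal states, the superposition lemma with the Hamming-ball volume bound over the full walker alphabet $\mathcal{A}_{2^{\kappa}P}$, the auxiliary classical register $I$ with Eq.~(\ref{eq:min_min_ent}) to reduce to a minimum over product states, the coordinate-wise bound $p(z|w_{i})\le \xi^{\,n-wt(i)}$ yielding the $\eta_{q}\gamma'$ term, and the deferred ideal-to-real translation and privacy amplification. The one step you flag as a potential obstacle — that the coarse-grained POVM outcome probability for $\ket{w_{0}}$ coincides with $G'$ (resp.\ $G''$) — is exactly what the paper disposes of by writing $\ket{w_{i}}=\sum_{c}\ket{\varphi(c,i)}\otimes\ket{c}$ and checking $p(z|w_{i})=\sum_{c}|\beta_{z,c,i}|^{2}$ by direct computation.
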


\begin{proof}
Suppose that $\ket{\psi_{i}}_{AE}$ be the quantum state that the dishonest source $E$ creates and sends the $A$ portion to Alice $N$-times with the fixed error $\varepsilon  > 0$. If the source is honest, Alice receives the state $\ket{\psi_{0}}_{AE} = \ket{w_{0}}^{\otimes N}\otimes \ket{E_{0}}$, where $\ket{w_{i}} \in \mathcal{H}_{W} = \mathcal{H}_{P}\otimes \mathcal{H}_{M}\otimes \mathcal{H}_{C}$ and $i \in \{0,...,2^{\kappa}P - 1\}$. Note that $\kappa$ is the number of all (memory and active) coins and $P$ is the walker's positional dimension. By Theorem 1. in the quantum sampling, there exists ideal states, indexed over all subsets $t\subset [N]$ of size $m$ such that $\ket{\varphi_{t}}\in span\big(\ket{w_{i_1},...,w_{i_N}}: |w(i_{t}) - w(i_{\bar{t}})| \leq \delta\big)\otimes \mathcal{H}_E$. Note that we define $\ket{w_{0}} = \ket{0}_{P}\otimes\ket{0\cdots 0}_{c_{\mu}}\otimes \ket{0}_{c_{\alpha}}$, where $c_{\mu} = c_{0},...,c_{\kappa-2}$ and $c_{\alpha} = c_{\kappa-1}$. We utilize a similar approach, a two-step proof method, from \cite{krawec2020new}, \cite{bae2021quantum} to show the security. Analyzing the security of the ideal state $\sigma_{TAE} = 1/T\sum_{t}[t] \otimes [\varphi_{t}],$ where $T = {N \choose m}$ and $p(t) = 1/T$ is the first footstep. We measure the $T$ register in $\sigma_{TAE}$, which leads the state to collapse to the superposition of ideal states $\ket{\varphi_{t}}$ that is a quantum analogy of a classical random sampling. After the quantum sampling, we measure the ideal states $\ket{\varphi_{t}}$ in a set of measurement POVM $\mathcal{W}$ to have $q\in \{0,1\}^{m}$.  It tests whether a sample of quantum states $\ket{w_{i}}$ is honest or not. If the outcome is $q=0$, i.e., $\ket{w_{0}}$, Alice considers the state from the source is honest so that she can extract true randomness. Then the experiment traces out the measured portion of size $m$ resulting in the post-measurement state $\sigma(t,q)$ acting on $\mathcal{H}_{2^{\kappa}P}^{\otimes n} \otimes \mathcal{H}_{E}$. Since $\ket{\varphi_{t}}\in span(\mathcal{B}_{t,2^{\kappa}P}^{\delta})\otimes \mathcal{H}_{E}$, we claim that the post-measurement state is of the form:
\begin{equation}
\sigma(t,q) = \sum_{e\in\mathcal{A}_{(2^{\kappa}P) - 1}^{wt(q)}} p_{e}\cdot \sigma_{AE}^{(e)},
\end{equation}
where $P(Z) = ZZ^{*}$, $wt(q)$ is the (non-relative) Hamming weight of $q$, $\sigma_{AE}^{(e)}= P\big(\sum_{i\in J_{q}^{(e)}} \alpha_{i}^{(e)} \ket{w_{i}} \otimes \ket{E_{i}}\big)$, and $J_{q}^{(e)} \subset \{i\in \mathcal{A}_{2^{k}P}^{n}: |w(i) - w(q)|\leq \delta\}.$ Recall $n=N-m$. This is the form of the post-measurement state after the experiment is done. Indeed, note that $\ket{\varphi_{t}}$ is a superposition of vectors of the form $\{\ket{w_{i}}: |w(i) - w(q)|\leq \delta \}$. Thus, on observing $q$ using POVM $\mathcal{W}$ on subspace indexed by $t$, but before tracing out the measured portion, the state is of the form:
\begin{equation}\label{eq:post-measure-01}
\sum_{e\in \Omega_q}\sqrt{p_e}\ket{x_e}_Q \otimes \sum_{i\in J_{q}^{(e)}}\alpha_{i}^{(e)}\ket{w_{i}}\otimes \ket{E_{i}},
\end{equation}
where $\Omega_q = \{e\in \mathcal{A}_{2^{k}P}^{m}: e_{i} = 0 \text{ iff } q_{i}  = 0\}$. As the final step of the experiment, tracing out the $Q$ register brings Eq. (\ref{eq:post-measure-01}). Let us consider one of the $\sigma_{AE}^{(e)}$ states. And in the using memory case, Alice performs a measurement using POVM $\mathcal{Z}'$ (\ref{eq:povm-z-01}), on the remaining $A$ portion to extract true randomness based on the hi-dimensional space (in the not using memory case, she measures the state using POVM $\mathcal{Z}''$ (\ref{eq:povm-z-02}). For the using memory case, to compute this state $\sigma(t,q)$, we write a single quantum walker $\ket{w_{i}}\in \mathcal{H}_W$ as $\ket{w_{i}} = \sum_{c_{\alpha}}\ket{\varphi(c_{\alpha},i)}\otimes \ket{c_{\alpha}}$, where $\ket{\varphi(c_{\alpha}, i)}$ are walker's states in $\mathcal{H}_{2^{\mu}P}$ and $\ket{c_{\alpha}}$ is the active coin. For the not using memory case, $\ket{w_{i}}\in \mathcal{H}_W$ as $\ket{w_{i}} = \sum_{c_{\kappa}}\ket{\varphi(c_{\kappa},i)}\otimes \ket{c_{\kappa}}$, where $\ket{c_{\kappa}}$ are all memory and active coins and $\ket{\varphi(c_{\kappa}, i)}$ are walker's states in $\mathcal{H}_{P}$. With this notation, in the using memory case, we can compute a post-measurement state, with Alice storing the outcome $z\in \mathcal{A}_{2^{\mu}P}^{n}$ in a classical register $Z$ and also tracing out the active coin register. The post-measurement state is as follows:
\begin{equation*}
\sigma_{ZE}^{(e)} = \sum_{z}[z]_{Z}\sum_{i,j\in J_{q}^{(e)}}\alpha_{i}\alpha_{j}^{*}\sum_{c_{\alpha}\in\{0,1\}^{n}} \beta_{z,c_{\alpha},i}\beta_{z,c_{\alpha},i}^{*}\otimes [E_{ij}],
\end{equation*}
where $z\in\mathcal{A}_{2^{\mu}P}^{n}$ and $\beta_{z,c_{\alpha},i} = \prod_{\ell=0}^{n-1}\braket{z_{\ell}}{\varphi(c_{\alpha}^{(\ell)},i_{\ell})}$. In the not using memory case, Alice saves the outcome $z\in\mathcal{A}_{P}^{n}$ in a classical register $Z$ after her measurements with the set $\mathcal{Z}''$ and tracing out the all memory and active coins register. Then the post-measurement state is as follows:
\begin{equation*}
\sigma_{ZE}^{(e)} = \sum_{z}[z]_{Z}\sum_{i,j\in J_{q}^{(e)}}\alpha_{i}\alpha_{j}^{*}\sum_{c_{\kappa}\in\{0,1\}^{\kappa n}}\beta_{z,c_{\kappa},i}\beta_{z,c_{\kappa},i}^{*}\otimes [E_{ij}],
\end{equation*}
where $z\in\mathcal{A}_{P}^{n}$ and $\beta_{z,c_{\kappa},i} = \prod_{\ell=0}^{n-1}\braket{z_{\ell}}{\varphi(c_{\kappa}^{(\ell)},i_{\ell})}$. In the using memory case, to compute the min-entropy of the post-measurement state, we will consider the following density operator:
\begin{equation*}
\chi_{ZE} = \sum_{z}[z]\sum_{i}|\alpha_{i}|^{2}\sum_{c_{\alpha}} |\beta_{z,c_{\alpha},i}|^{2} \otimes [E_i].
\end{equation*} 
Similarly, we will have the following density operator in the not using memory case:
\begin{equation*}
\chi_{ZE} = \sum_{z}[z]\sum_{i}|\alpha_{i}|^{2}\sum_{c_{\kappa}}|\beta_{z,c_{\kappa},i}|^{2}\otimes [E_{i}].
\end{equation*}

Due to the superposition lemma $(12)$, in both cases, it bounds the min-entropy of a superposition based on the min-entropy of both suitable mixed states, we find that:

\begin{equation}
H_{\infty}(Z|E)_{\sigma^{(e)}} \geq H_{\infty}(Z|E)_{\chi} - \log_2 |J_{q}^{(e)}|.
\end{equation}

Consider the state $\chi_{ZEI}$ where we append an auxiliary system spanned by orthonormal basis $\ket{i}$ as follows:
\begin{equation}
\chi_{ZEI} = \sum_{i}|\alpha_{i}|^{2}\cdot\chi^{(i)}\otimes [E_{i}] \otimes [i],
\end{equation}
where $\chi^{(i)} = \sum_{z}[z]\sum_{c_{\alpha}}|\beta_{z,c_{\alpha},i}|^{2}$ in the using memory case and $\chi^{(i)}=\sum_{z}[z]\sum_{c_{\kappa}}|\beta_{z,c_{\kappa},i}|^{2}$ in the not using memory case. Particularly, in the using memory case, for strings $z\in \mathcal{A}_{2^{\mu}P}^{n}$ and $i\in \mathcal{A}_{2^{k}P}^{n}$, let $p(z|w_{i})$ be the probability that outcomes $z$ is observed if measuring the pure, and unentangled state, state $\ket{w_{i_1},...,w_{i_n}}$ using POVM $\mathcal{Z}'$. Simple algebra shows that this is in fact $p(z|w_{i}) = \sum_{c_{\alpha}}|\beta_{z,c_{\alpha},i}|^{2}$. So $\chi^{(i)} = \sum_{z}p(z|w_{i})[z]$. Similarly, in the not using memory case, for strings $z\in \mathcal{A}_{P}^{n}$ and $i\in \mathcal{A}_{2^{k}P}^{n}$, let $p(z|w_{i})$ be the probability that outcomes $z$ is observed if measuring the pure, and unentangled state, state $\ket{w_{i_1},...,w_{i_n}}$ using POVM $\mathcal{Z}''$. Again, simple algebra shows that is in fact $p(z|w_{i}) = \sum_{c_{\kappa}}|\beta_{z,c_{\kappa},i}|^{2}$. So $\chi^{(i)} = \sum_{z}p(z|w_{i})[z]$. From the strong subadditivity of the min-entropy \cite{renner2008security}, in both cases, from Equation (\ref{eq:min_min_ent}), and treating the joint $EI$ registers as a single classical register, we have:
\begin{equation}\label{eq:subadditiviity-01}
H_{\infty}(Z|E)_{\chi} \geq H_{\infty}(Z|EI)_{\chi}  \geq \min_{i} H_{\infty}(Z)_{\chi^{(i)}}.
\end{equation}
Fix a particular $i \in J_{q}^{(e)}$ and $\eta = n - wt(i)$ (namely, $\eta$ is the number of zeros in the string $i$.) Then, in the using memory case, it is clear that $p(z|w_{i})\leq \max_{x, c_{\mu}} \mathbb{P}_W (\ket{w_{0}}\to (x, c_{\mu}))^{\eta} = \xi^{\eta}$. Indeed, any other  $\mathbb{P}_W (\ket{w_{0}}\to (x, c_{\mu})) \leq 1$. Similarly, in the not using memory case, it shows that $p(z|w_{i})\leq \max_{x} \mathbb{P}_W (\ket{w_{0}}\to x)^{\eta} = \xi^{\eta}$, where $\mathbb{P}_W (\ket{w_{0}}\to x) \leq 1$. So in the both cases, we may consider only the $\ket{w_{0}}$ term as contributing to this upper-bound. From this it follows that $H_{\infty}(Z)_{\chi^{(i)}} = -\log \max_{z}p(z|w_{i}) \geq -\log \xi^{n-wt(i)}$. From \cite{krawec2020new} and \cite{bae2021quantum}, by considering the noise in the source via the sampling, we have that for $i\in J_{q}^{(e)}$ and $w(i)\leq n(w(q)+\delta)$, $\min_{i} H_{\infty} (Z)_{\chi^{(i)}} \geq -\log \xi^{n(1-w(q)\delta)} \geq \eta_{q}\gamma',$
where $\eta_{q} = n(1-(w(q)+\delta))$, the gamma is $\gamma'$ (\ref{eq:gamma-01}) in the using memory case, and the gamma is $\gamma'$ (\ref{eq:gamma-02}) in the not using memory case. Finally, by the well known bound on the volume of a Hamming ball $|J_{q}^{(e)}| \leq d^{n\bar{H}_{d}(w(q)+\delta)}$, Eq. (\ref{eq:ent-super-inequality}) in the superposition lemma, and the inequality (\ref{eq:subadditiviity-01}), we have the bound of the min-entropy:
\begin{equation}
H_{\infty}(Z|E)_{\sigma^{(e)}} \geq \eta_{q}\gamma' - \frac{n\bar{H}_{2^{\kappa}P}(w(q)+\delta)}{\log_{2^{\kappa}P}(2)}.
\end{equation}

The above analysis is for the ideal state. Since we use a similar technique that we employed in \cite{krawec2020new} and \cite{bae2021quantum} for translating this ideal analysis to the real case, we refer the details to \cite{krawec2020new} and \cite{bae2021quantum}.
\end{proof}

%

\section{Evaluation}
Computing the gamma $\gamma$ is crucial to produce a string of true random bits. The gamma can be optimized because the walker's evolution operator can be parameterized by time $t$, an angle $\theta$, a phase $\phi$, and flip-coin operators throughout the protocols. We will take a look at various scenarios for the different cases. First, we evaluate the performance of the above protocols under a variety of walker's dimensions $|W| = 2^{\kappa}\cdot P$ with different positions $P = 3, 5, 11, 21 \text{, and } 51$, different the number of recycled coins $\kappa = 1, 2, 3 \text{,  and }4$, and use of the Hadamard coin operator, the generalized, and flip coin operators (\ref{eq:general-coin-op}) and (\ref{eq:flip-coin-ops}). To optimize $H_{\infty}^{\varepsilon}(Z|E)_{\rho}$ as so the random bit rate, it needs the utmost gamma $\gamma$ over time $t$ for fixed dimension P and the number of coins $\kappa$. We found that for fixed position $P$ and the number of coins $\kappa$, the maximal gamma $\gamma$ value over all time setting $t=1,...,2000$ with the Hadamard coin operator and $t=1,...,1000$ with the generalized and flip coin operators. We set that the source sends the $N$-number of signals. Then a user employs a sample size that is the square root of the total number of signals $N$, namely, $m=\sqrt{N}$. The user computes a random bit rate $\ell_{ours}/N$.

\subsection{The Using All Case}
In the using all case, the protocol utilizes the memory and active coins to generate a string of true random bits. The secure random bit rates are computed through the main theorems from \cite{krawec2020new}, \cite{bae2021quantum} and their modifications for the version of the HD-QW state\cite{rohde2013quantum}. To compute the gamma $\gamma$ in Eq. (\ref{eq:gamma-00}), we employ the maximum probability with the Hadamard operator. Note that for $\kappa = 1$, which is the non-memory based quantum walk case, the case has the following maximum probabilities: $G(1,3) = 0.2224$, $G(1,5) = 0.1474$, $G(1,11) = 0.0983$, $G(1,21) = 0.0642$, and $G(1,51) = 0.0367$ over $T$. For $\kappa >1$, the using all case with the maximum probability (\ref{eq:max-prob-00}) and the maximum probability (\ref{eq:general-max-prob-00}) are evaluated in table I and II, respectively. Note that in the table II, the function $G(\kappa, P, F)$ is written as $G(\kappa, P)$.

\begin{table}[h!]
\caption{\label{tab:gamma-values} In general, as the number of recycled coins $\kappa$ increases, the maximum probability function $G(\kappa, P)$ decreases.}
\begin{center}
\begin{tabular}{ | c | c | c || c | c | c ||c | c | c |}
\hline
$\kappa$ & $P$ & $G(\kappa, P)$ & $\kappa$ & $P$ & $G(\kappa, P)$  & $\kappa$ & $P$ & $G(\kappa, P)$\\ 
\hline
2 & 3 & 0.1250  &3 & 3 & 0.0570 &  4 & 3 & 0.0312\\ \hline
2 & 5 & 0.1249  &3 & 5 & 0.0535 &  4 & 5 & 0.0312\\ \hline
2 & 11& 0.0995 &3 & 11& 0.0450 & 4 & 11 & 0.0312\\ \hline
2 & 21& 0.1044 &3 & 21& 0.0282 & 4 & 21 & 0.0272\\ \hline
2 & 51& 0.1057 &3 & 51& 0.0190 & 4 & 51 & 0.0274\\ \hline
\end{tabular}

\end{center}
\end{table}
   
\begin{table}[h!]
\caption{\label{tab:gamma-values} In general, for all positional dimensions, as the number of coins increases, the function $G(\kappa, P)$ decreases. But in case of $\kappa =2$, the maximum probability fluctuates.}
\begin{center}
    \begin{tabular}{ | c | c | c || c | c | c || c | c | c |}
    \hline
    $\kappa$ & $P$ & $G(\kappa, P)$ & $\kappa$ & $P$ & $G(\kappa, P)$ & $\kappa$ & $P$ & $G(\kappa, P)$  \\ 
	\hline
    1 & 3 & 0.1729 &   2 & 3 & 0.1228  &3 & 3 & 0.0614 \\ \hline
    1 & 5 & 0.1133 & 2 & 5 & 0.1251 &3 & 5 & 0.0402 \\ \hline
    1 & 11 & 0.0534 & 2 & 11 & 0.0799 &3 & 11 & 0.0274 \\ \hline
    1 & 21 & 0.0420 & 2 & 21 & 0.0709 &3 & 21 & 0.0192 \\ \hline
    \end{tabular}
    
\end{center}
\end{table}   
The random bit rates of the using all case with the Hadamard coin operator show in Fig.1 and Fig.2. They show interesting manners depending on the $\kappa$-number of quantum recycled coins used.
\subsubsection{The odd number of coins improves random bit rates against noises}
The using all case of the memory-based QW-QRNG protocols utilize HD-QW states with the odd number of recycled coins ($\kappa=1, 3$) and the Hadamard coin operator achieves better random bit rates over all noises we tested, e.g., (w(q) = 0, 0.15, 0.2, 0.3). When we increase the number of recycled coins $\kappa$ to add increments, the random bit rates are improved. Particularly if the number of coins $\kappa=3$, the random bit rate over a low dimensional walker's space $|W| = 2^{3}\cdot 3$ shows more resilience as noises are increased than in other walker's spaces, shown in Fig.1. Note that since these simulation results are tested with a limited number of recycled coins, e.g., $\kappa=1,3$, for a clear outcome, it needs to experiment with a higher number of recycled coins such as $\kappa=5,7$, which is restricted as a walker's dimension is increased.
\begin{figure}[h!]
\centerline{\includegraphics[width=0.5\textwidth]{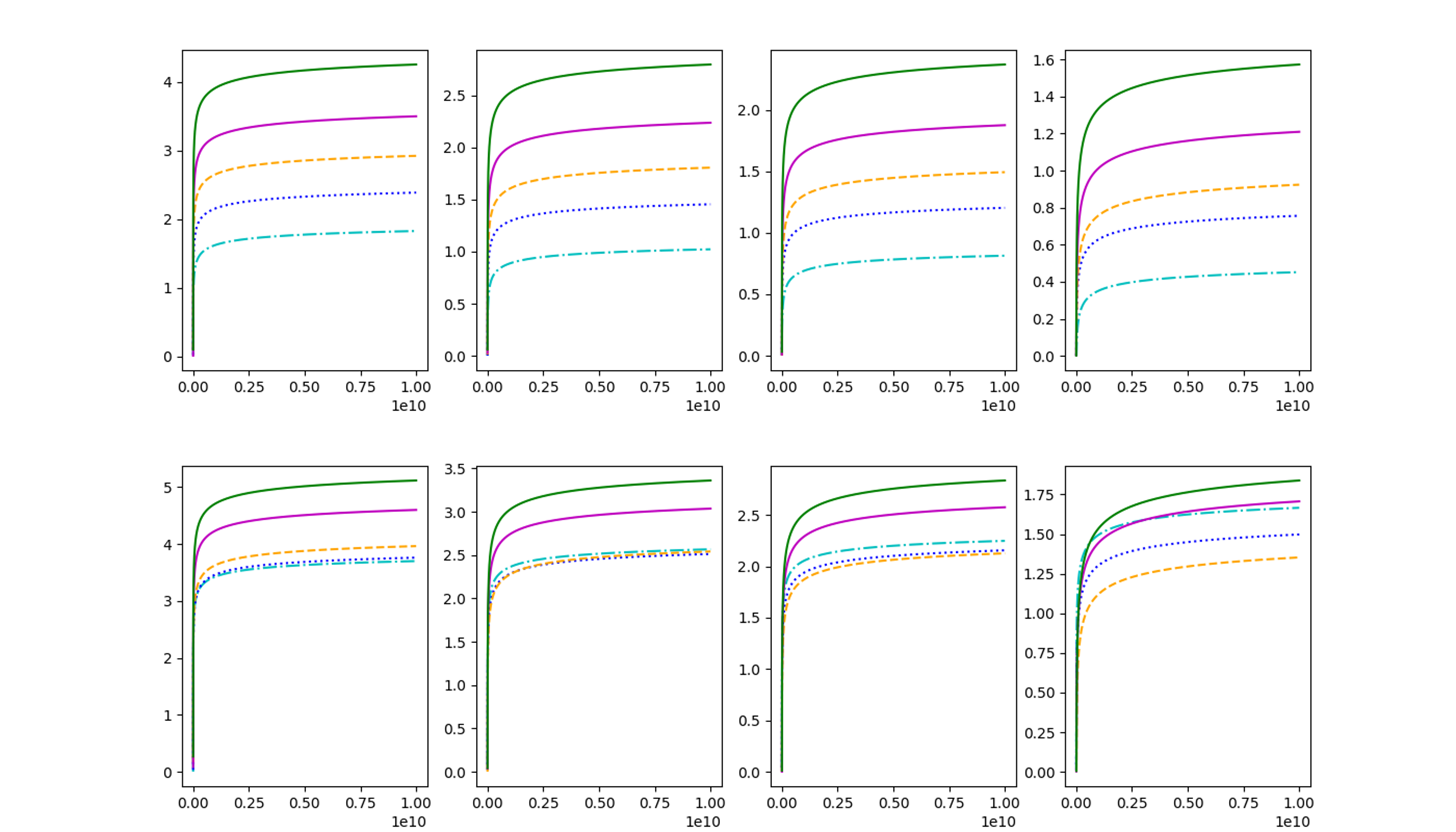}}
\caption{The first row is the case of the number of coins $\kappa = 1$, i.e., the case of the non-history dependent quantum walker; and the second row is the case of the odd-number of coins $\kappa=3$; $x$-axis: number of signals sent $N$; $y$-axis: random bit rate; green solid is $|W| = 2^{\kappa}\cdot 51$;  magenta solid is $|W| = 2^{\kappa}\cdot 21$;  orange-dashed is $|W| = 2^{\kappa}\cdot 11$; blue-dotted is $|W| = 2^{\kappa}\cdot 5$; cyan-dot-dashed is $|W| = 2^{\kappa}\cdot 3$; left graph is without noise ($w(q)=0$); second left graph is with $15\%$ noise in the source; second right graph has $20\%$ noise; right graph has $30\%$ noise.}
\label{fig:trivial_case_01}
\end{figure}

\subsubsection{The even number of coins enhances random bit rates of the low-dimensional walker's spaces against noises}
The memory-based QW-QRNG protocols with HD-QW states of the even number of recycled coins ($\kappa=2, 4$) and the Hadamard coin operator show the random bit rates over low dimensional walker's spaces, e.g., $|W| = 2^{2}\cdot 3, |W|=2^{4}\cdot 3$ are able to withstand to noises are increased than in high walker's spaces, shown in Fig.2. In other words, the random bit rate over a high dimensional walker's space, e.g., $|W|=2^{2}\cdot 51$, is vulnerable as noises are increased. But as the number of recycled coins $\kappa$ to even increments, the noise sensitivity is improved while the random bit rate over lower dimensions still shows better resilience than higher ones, shown in Fig.2. Similarly, since these simulation results are tested with a limited number of recycled coins, e.g., $\kappa=2,4$, for a better result, it needs to test with a higher number of recycled coins such as $\kappa=6,8$, which is restrained as a walker's dimension is increased.
\begin{figure}[h!]
    \centering
    \includegraphics[width=0.5\textwidth]{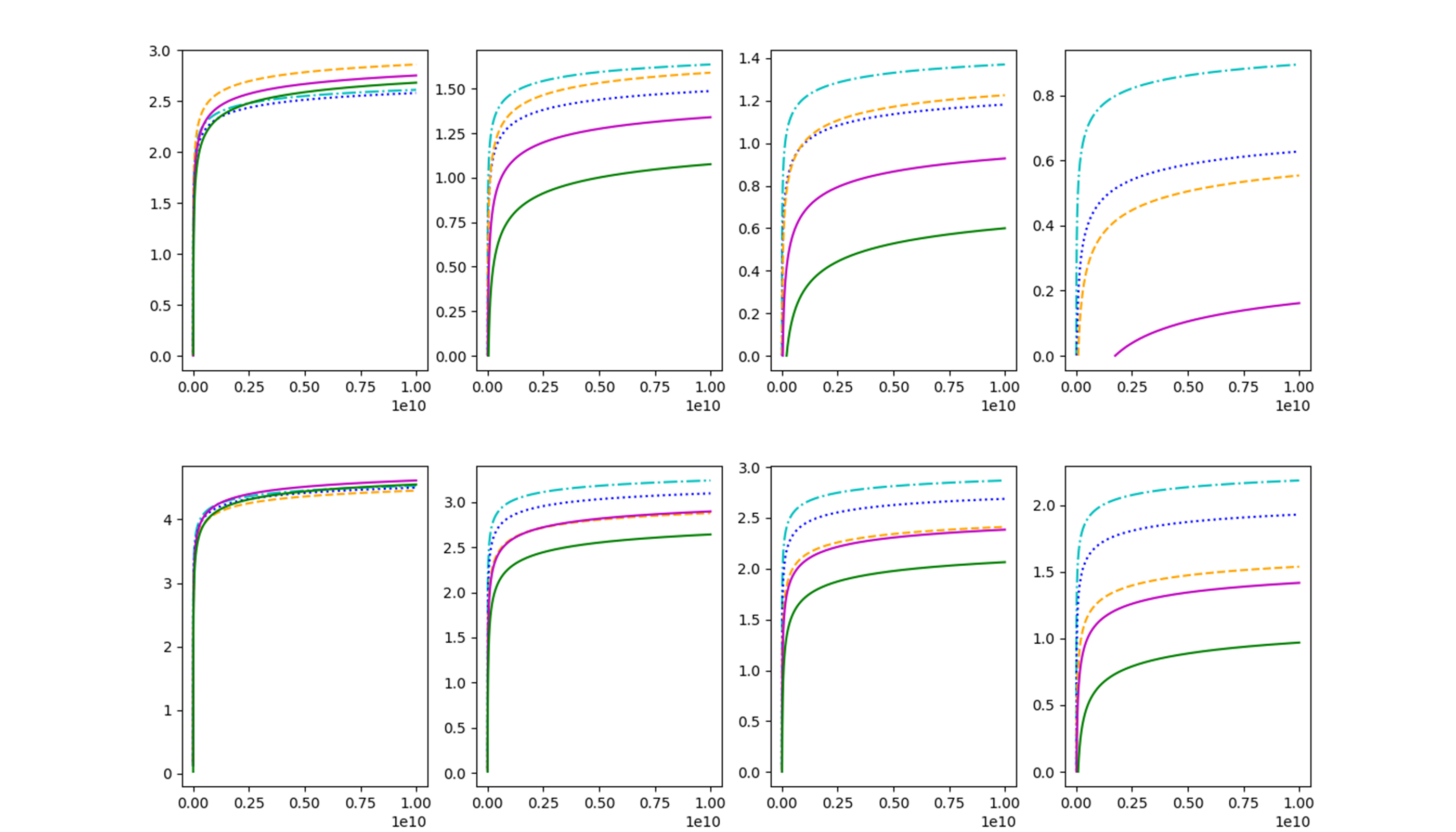}
    \caption{The first row is the case of the even-number of coins $\kappa = 2$; and the second row is the case of the another even-number of coins $\kappa=4$; $x$-axis is the number of signals sent $N$; $y$-axis is the random bit rate; green solid is $|W| = 2^{\kappa}\cdot 51$;  magenta solid is $|W| = 2^{\kappa}\cdot 21$; orange-dashed is $|W| = 2^{\kappa}\cdot 11$; blue-dotted is $|W| = 2^{\kappa}\cdot 5$; cyan-dot-dashed is $|W| = 2^{\kappa}\cdot 3$; left graph is without noise ($w(q)=0$); second left graph is with $15\%$ noise in the source; second right graph has $20\%$ noise; right graph has $30\%$ noise.}
    \label{fig:trivial_case_02}
\end{figure}


\subsubsection{The generalized and flip coin operators raise random bit rates}
First, the memory-based QW-QRNG using the generalized and flip-coin operators (\ref{eq:general-coin-op}) and (\ref{eq:flip-coin-ops}) improves its overall random bit rates, as shown in Fig.3. Secondly, the noise sensitivity in random bit rates over the higher walker's dimensions is improved, particularly in the cases of the even number of recycled coins ($\kappa=2$), shown in Fig.3.
\begin{figure}[h!]
    \centering
    \includegraphics[width=0.5\textwidth]{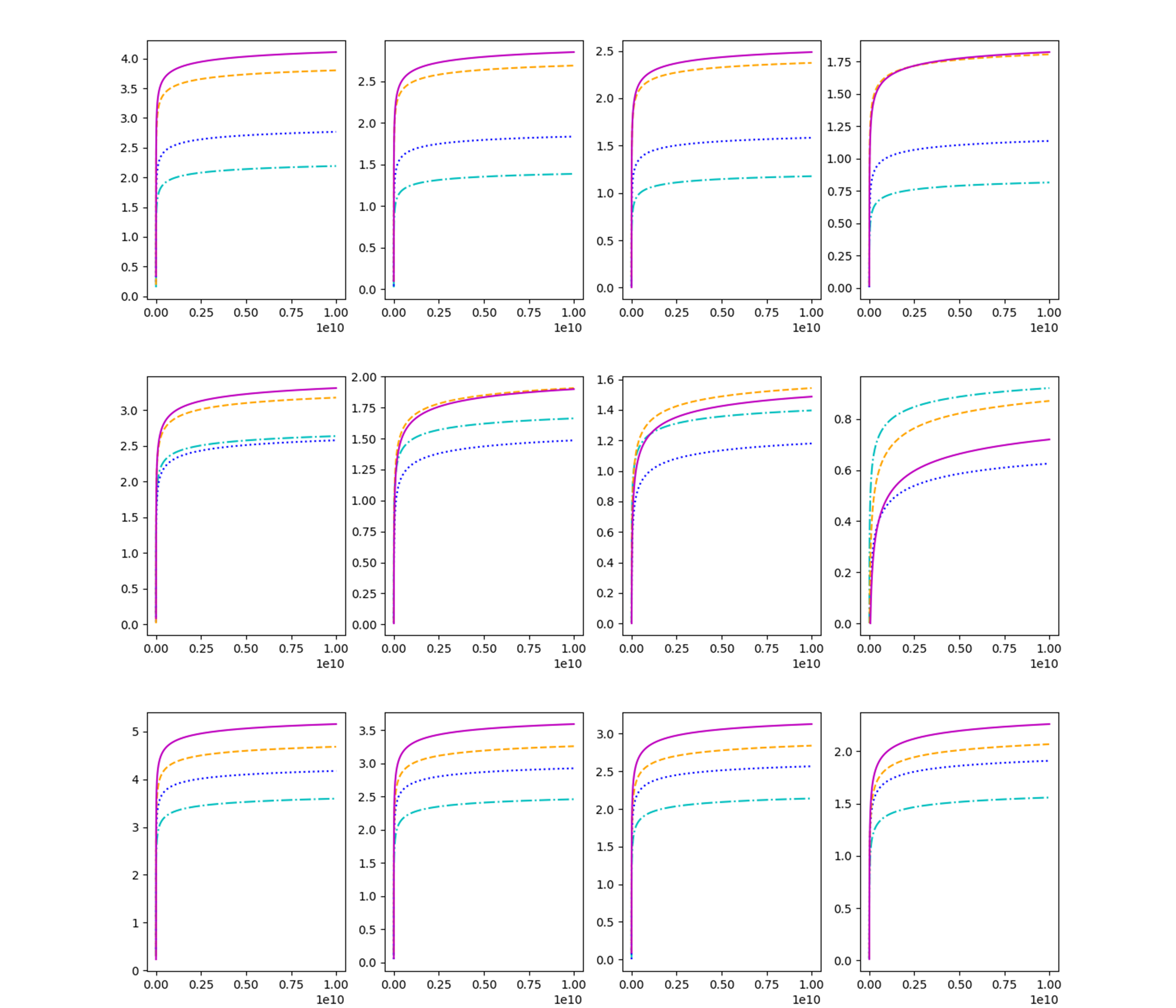}
    \caption{The first row is the case of the number of coins $\kappa = 1$; the second row is the case of the number of coins $\kappa=2$; and the last row is the case of the number of coins $\kappa=3$; $x$-axis is the number of signals sent $N$; $y$-axis is the random bit rate; magenta solid is $|W| = 2^{\kappa}\cdot 21$;  orange-dashed is $|W| = 2^{\kappa}\cdot 11$; blue-dotted is $|W| = 2^{\kappa}\cdot 5$; cyan-dot-dashed is $|W| = 2^{\kappa}\cdot 3$; left graph is without noise ($w(q)=0$); second left graph is with $15\%$ noise in the source; second right graph has $20\%$ noise; right graph has $30\%$ noise.}
    \label{fig:optimized_trivial_case_01}
\end{figure}


\subsection{The Using Memory Case}


In the using memory case, the protocol exploits only the $\mu$-number of memory coins to produce a string of true random bits. The secure random bits rates are to compute via Eq. (\ref{eq:gamma-01}) in Theorem 3.  For $\kappa = 1$, which is the non-memory based quantum walk case, the case has the following maximum guessing probabilities (\ref{eq:max-prob-01}): $G'(1,3) = 0.3634$, $G'(1,5) = 0.2447$, $G'(1,11) = 0.1358$, $G'(1,21) = 0.0919$, and $G'(1,51) = 0.0517$. The using memory case with the maximum probability (\ref{eq:max-prob-01}) and the maximum probability (\ref{eq:general-max-prob-01}) are computed in table III and IV, respectively. Note that in the table IV, the function $G'(\kappa, P, F)$ is written as $G'(\kappa, P)$.
\begin{table}[h!]
\caption{\label{tab:gamma-values} In general, as the number of coins $\kappa$ increases, the function $G'(\kappa, P)$ decreases. But, in case of $\kappa = 2$, the maximum probability fluctuates.}
\begin{center}
\begin{tabular}{ | c | c | c || c | c | c ||c | c | c |}
\hline
$\kappa$ & $P$ & $G'(\kappa, P)$ & $\kappa$ & $P$ & $G'(\kappa, P)$  & $\kappa$ & $P$ & $G'(\kappa, P)$\\ 
\hline
2 & 3 & 0.2500  &3 & 3 & 0.1120 &  4 & 3 & 0.0625\\ \hline
2 & 5 & 0.1875  &3 & 5 & 0.0656 &  4 & 5 & 0.0617\\ \hline
2 & 11& 0.1378 &3 & 11& 0.0524 & 4 & 11 & 0.0453\\ \hline
2 & 21& 0.1342 &3 & 21& 0.0374 & 4 & 21 & 0.0340\\ \hline
2 & 51& 0.1377 &3 & 51& 0.0233 & 4 & 51 & 0.0314\\ \hline
\end{tabular}
    
    \end{center}
   \end{table}
\begin{table}[h!]
\caption{\label{tab:gamma-values} In general, for all positional dimensions, as the number of coins increases, the function $G'(\kappa, P)$ decreases. But in case of $\kappa =2$, the maximum probability fluctuates.}
\begin{center}
    \begin{tabular}{ | c | c | c || c | c | c || c | c | c |}
    \hline
    $\kappa$ & $P$ & $G'(\kappa, P)$ & $\kappa$ & $P$ & $G'(\kappa, P)$ & $\kappa$ & $P$ & $G'(\kappa, P)$  \\ 
    \hline
    1 & 3 & 0.3334&   2 & 3 & 0.1751  &3 & 3 & 0.0898 \\ \hline
    1 & 5 & 0.2017 & 2 & 5 & 0.1615 &3 & 5 & 0.0661 \\ \hline
    1 & 11 & 0.0952 & 2 & 11 & 0.1082 &3 & 11 & 0.0417 \\ \hline
    1 & 21 & 0.0617 & 2 & 21 & 0.0743 &3 & 21 & 0.0264 \\ \hline
    \end{tabular}
    
\end{center}
\end{table}

The random bit rates of the using memory case with the Hadamard coin operator and the generalized coin operator show in Fig.4 and Fig.5, respectively.
\subsubsection{Increasing the number of memory coins improves random bit rates against overall noises}
As the size of the memory coin space is increased, we see that the using memory case is resilient to noises so that it shows better random bit rates than one that has a small memory coin space ($\kappa=2$), see Fig.4. Note that the case of $\kappa=1$ has no memory coin space in an HD-QW state. In the case of $\kappa=2$, there is only one memory coin $\mu= \kappa-1$, the random rate shows that the protocol is vulnerable to noises, e.g., $w(q)=0.2$. But as the number of memory coins is increased, the random bit rates in Fig.4 show that the using memory case recovers quickly from noises. 
\begin{figure}[h!]
    \centering
    \includegraphics[width=0.5\textwidth]{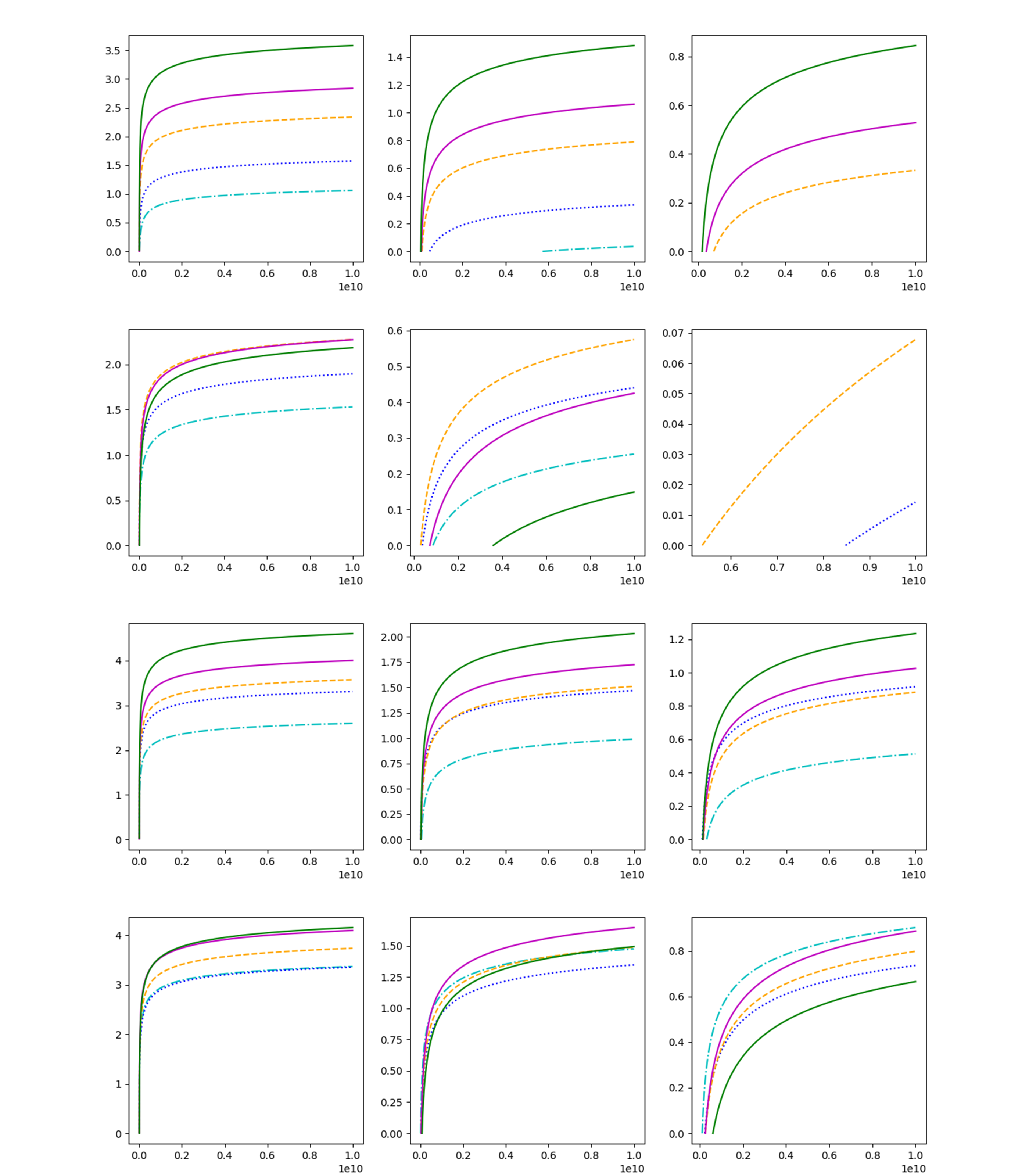}
    \caption{The first row is the case of the number of coins $\kappa = 1$, that is, the case of the non-history dependent quantum walker; the second row is the case of $\kappa=2$; the third row is the case of $\kappa=3$; and the last row is the case of $\kappa=4$; $x$-axis is the number of signals sent $N$; $y$-axis is the random bit rate; green solid is $|W| = 2^{\kappa}\cdot 51$; magenta solid is $|W| = 2^{\kappa}\cdot 21$;  orange-dashed is $|W| = 2^{\kappa}\cdot 11$; blue-dotted is $|W| = 2^{\kappa}\cdot 5$; cyan-dot-dashed is $|W| = 2^{\kappa}\cdot 3$; left graph is without noise ($w(q)=0$); middle graph is with $15\%$ noise in the source; right graph has $20\%$ noise.}
    \label{fig:using_memory_case_01}
\end{figure}


\subsubsection{The generalized and flip coin operators increase the overall random bit rates}
Using the generalized and flip-coin operators helps improve the using memory case's random bit rates. The protocol's vulnerability having the small memory coin space to noises is improved, see Fig.5.

\begin{figure}[h!]
    \centering
    \includegraphics[width=0.5\textwidth]{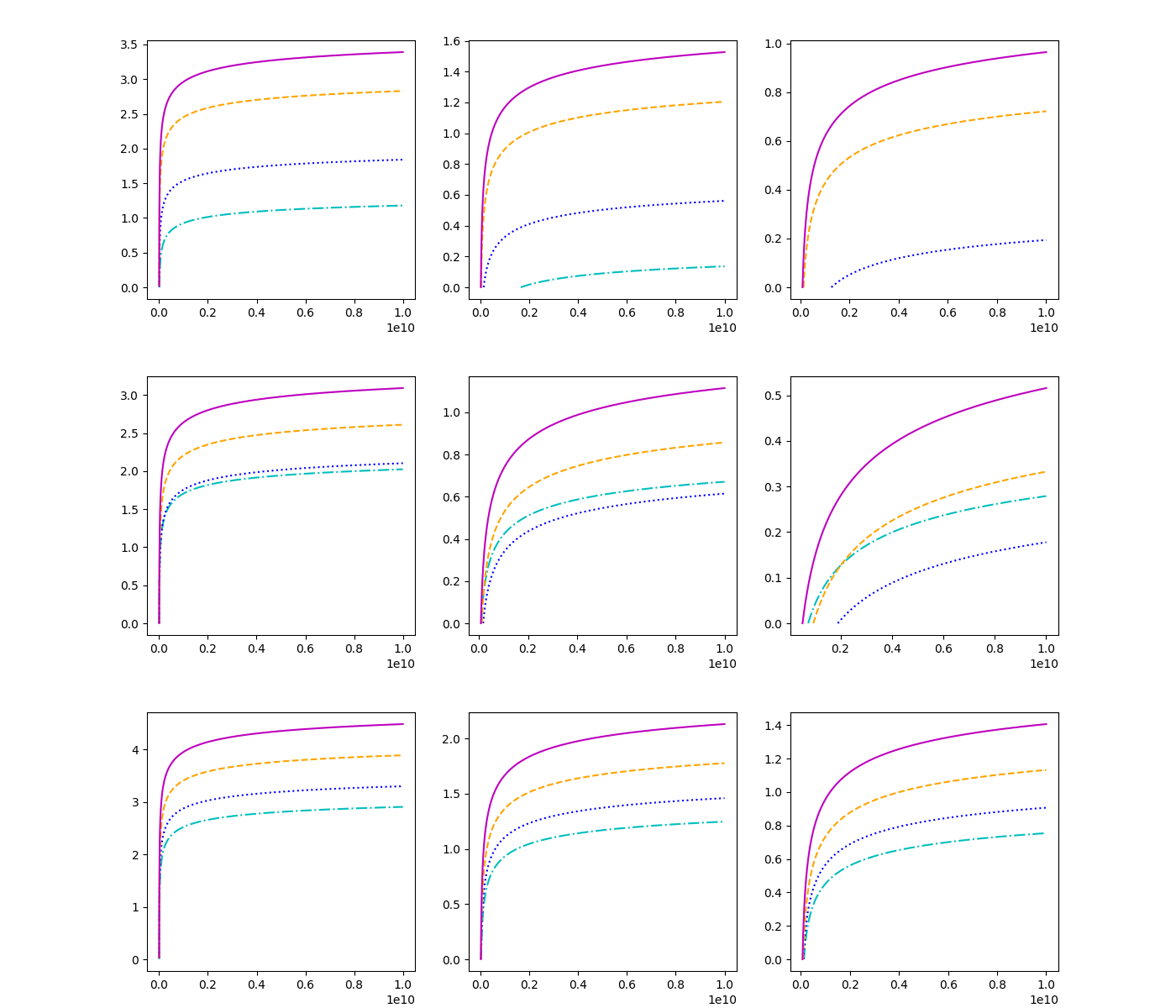}
    \caption{The first row is the case of the number of coins $\kappa = 1$; the second row is the case of the number of coins $\kappa=2$; and the last row is the case of the number of coins $\kappa=3$; $x$-axis is the number of signals sent $N$; $y$-axis is the random bit rate; magenta solid is $|W| = 2^{\kappa}\cdot 21$;  orange-dashed is $|W| = 2^{\kappa}\cdot 11$; blue-dotted is $|W| = 2^{\kappa}\cdot 5$; cyan-dot-dashed is $|W| = 2^{\kappa}\cdot 3$; left graph is without noise ($w(q)=0$); middle graph is with $15\%$ noise in the source; right graph has $20\%$ noise.}
    \label{fig:optimized_using_memory_case_01}
\end{figure}

\subsection{The Not Using Memory Case}
In the not using memory case, the protocol does not use any recycled (memory and active) coins to generate a string of true random bits. The secure random bits rates are computed by using Eq. (\ref{eq:gamma-02}) in Theorem 3. For $\kappa = 1$, which is the non-memory based quantum walk case, the case has the following maximum guessing probabilities (\ref{eq:general-max-prob-02}): $G''(1,3) = 0.3634$, $G''(1,5) = 0.2447$, $G''(1,11) = 0.1358$, $G''(1,21) = 0.0919$, and $G''(1,51) = 0.0517$. For $\kappa >1$, the not using memory case with the maximum probability (\ref{eq:max-prob-02}) and the maximum probability (\ref{eq:general-max-prob-02}) are evaluated in table V and VI, respectively. Note that in the table VI, the function $G''(\kappa, P, F)$ is written as $G'(\kappa, P)$. For all positions, the maximum probability $G''(\kappa, P)$ of the number of coins $\kappa =4$ will be computed to have better insight. When the protocol excludes to use all memory and active coins to generate random bits, the protocol becomes very vulnerable to the noises. 

\begin{table}[h!]
\caption{\label{tab:gamma-values} As the number of coins $\kappa$ increases, the function $G''(\kappa, P)$ does not decrease much. Especially, for higher position dimensions, e.g., $P=21, 51$, the function $G''(\kappa, P)$ goes up and down.}
\begin{center}
    \begin{tabular}{ | c | c | c || c | c | c ||c | c | c |}
    \hline
   $\kappa$ & $P$ & $G''(\kappa, P)$ & $\kappa$ & $P$ & $G''(\kappa, P)$  & $\kappa$ & $P$ & $G''(\kappa, P)$\\ 
    \hline
    2 & 3 & 0.3336  &3 & 3 & 0.3400 &  4 & 3 & 0.3437\\ \hline
    2 & 5 & 0.2570  &3 & 5 & 0.2165 &  4 & 5 & 0.2055\\ \hline
    2 & 11& 0.1831 &3 & 11& 0.1186 & 4 & 11 & 0.1230\\ \hline
    2 & 21& 0.1692 &3 & 21& 0.0778 & 4 & 21 & 0.0808\\ \hline
    2 & 51& 0.1701 &3 & 51& 0.0379 & 4 & 51 & 0.0709\\ \hline
    \end{tabular}
    
    \end{center}
   \end{table}
\begin{table}[h!]
\caption{\label{tab:gamma-values} In general, as the number of coins $\kappa$ increases, the function $G''(\kappa, P)$ does not lessen much.}
\begin{center}
    \begin{tabular}{ | c | c | c || c | c | c || c | c | c |}
    \hline
    $\kappa$ & $P$ & $G''(\kappa, P)$ & $\kappa$ & $P$ & $G''(\kappa, P)$ & $\kappa$ & $P$ & $G''(\kappa, P)$  \\ 
    \hline
    1 & 3 & 0.3334 &   2 & 3 & 0.3340  &3 & 3 & 0.3336 \\ \hline
    1 & 5 & 0.2017 & 2 & 5 & 0.2197 &3 & 5 & 0.2097 \\ \hline
    1 & 11 & 0.0952 & 2 & 11 & 0.1275 &3 & 11 & 0.1039 \\ \hline
    1 & 21 & 0.0617 & 2 & 21 & 0.0834 &3 & 21 & 0.0642 \\ \hline
    \end{tabular}
    
    \end{center}
   \end{table}

\subsubsection{Increasing the number of coins does not help remedy the vulnerability to noise}
The not using memory protocol is vulnerable to noises. Also, increasing the number of coins in the walker's evolution does not help remedy noise vulnerability and improve random bit rates, see Fig.6.
\begin{figure}[h!]
    \centering
    \includegraphics[width=0.5\textwidth]{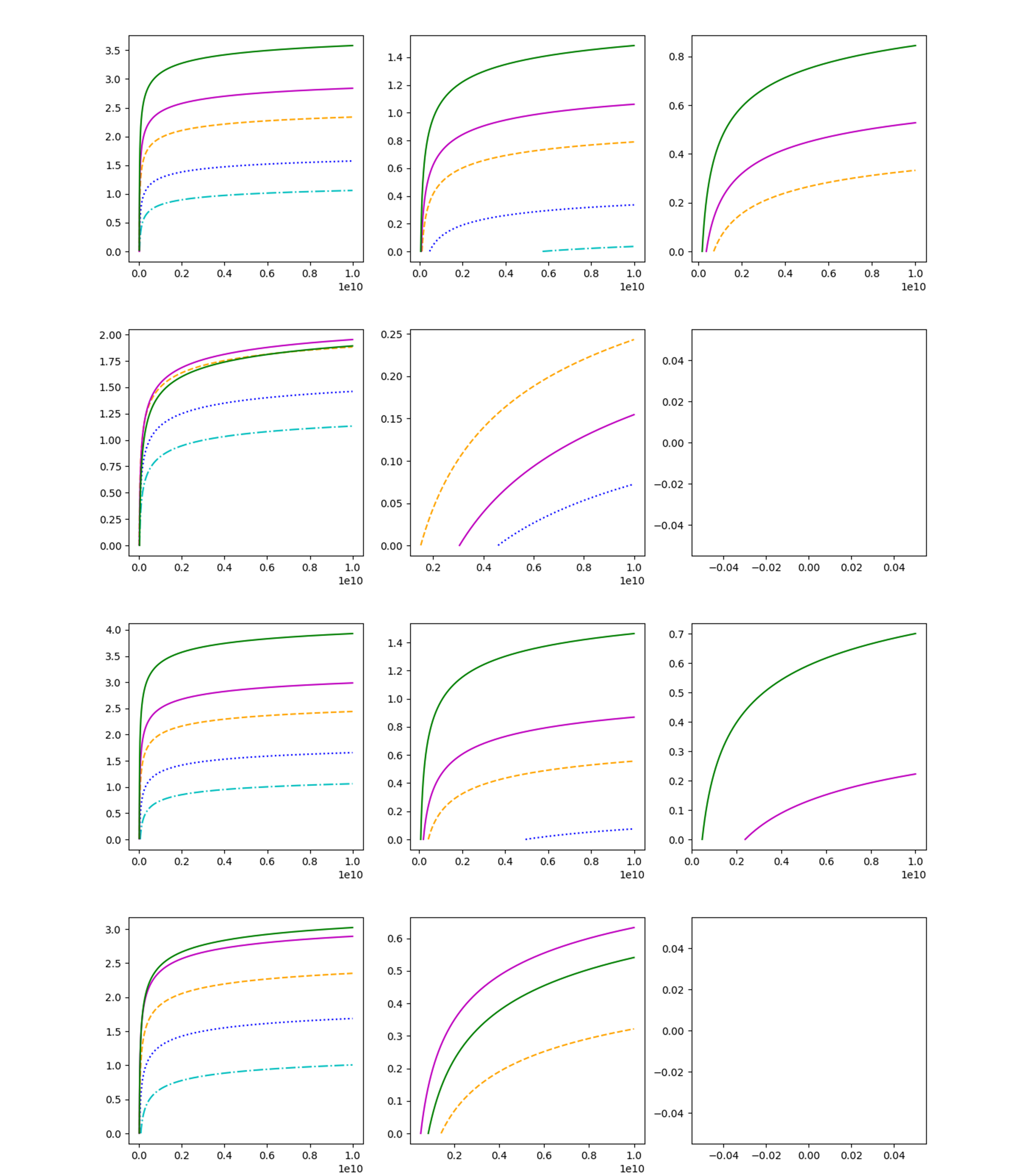}
    \caption{The first row is the case of the number of coins $\kappa = 1$; the second row is the case of the number of coins $\kappa=2$; the third row is the case of the number of coins $\kappa=3$; the last row is the case of the number of coins $\kappa=4$; $x$-axis is the number of signals sent $N$; $y$-axis is the random bit rate; green solid is $|W| = 2^{\kappa}\cdot 51$; magenta solid is $|W| = 2^{\kappa}\cdot 21$;  orange-dashed is $|W| = 2^{\kappa}\cdot 11$; blue-dotted is $|W| = 2^{\kappa}\cdot 5$; cyan-dot-dashed is $|W| = 2^{\kappa}\cdot 3$; left graph is without noise ($w(q)=0$); middle graph is with $15\%$ noise in the source; right graph has $20\%$ noise.}
    \label{fig:not_using_memory_case_01}
\end{figure}

\subsubsection{The generalized and flip coin operators do not help to improve the vulnerability to noise}
Using the generalized and flip-coin operators and increasing the number of coins do not help remedy noise vulnerability and improve random bit rates. It looks when the protocol is not using memory coins, and the protocol is vulnerable to noises overall the positional dimensions as the number of coins increases, see Fig.7.
\begin{figure}[h!]
    \centering
    \includegraphics[width=0.5\textwidth]{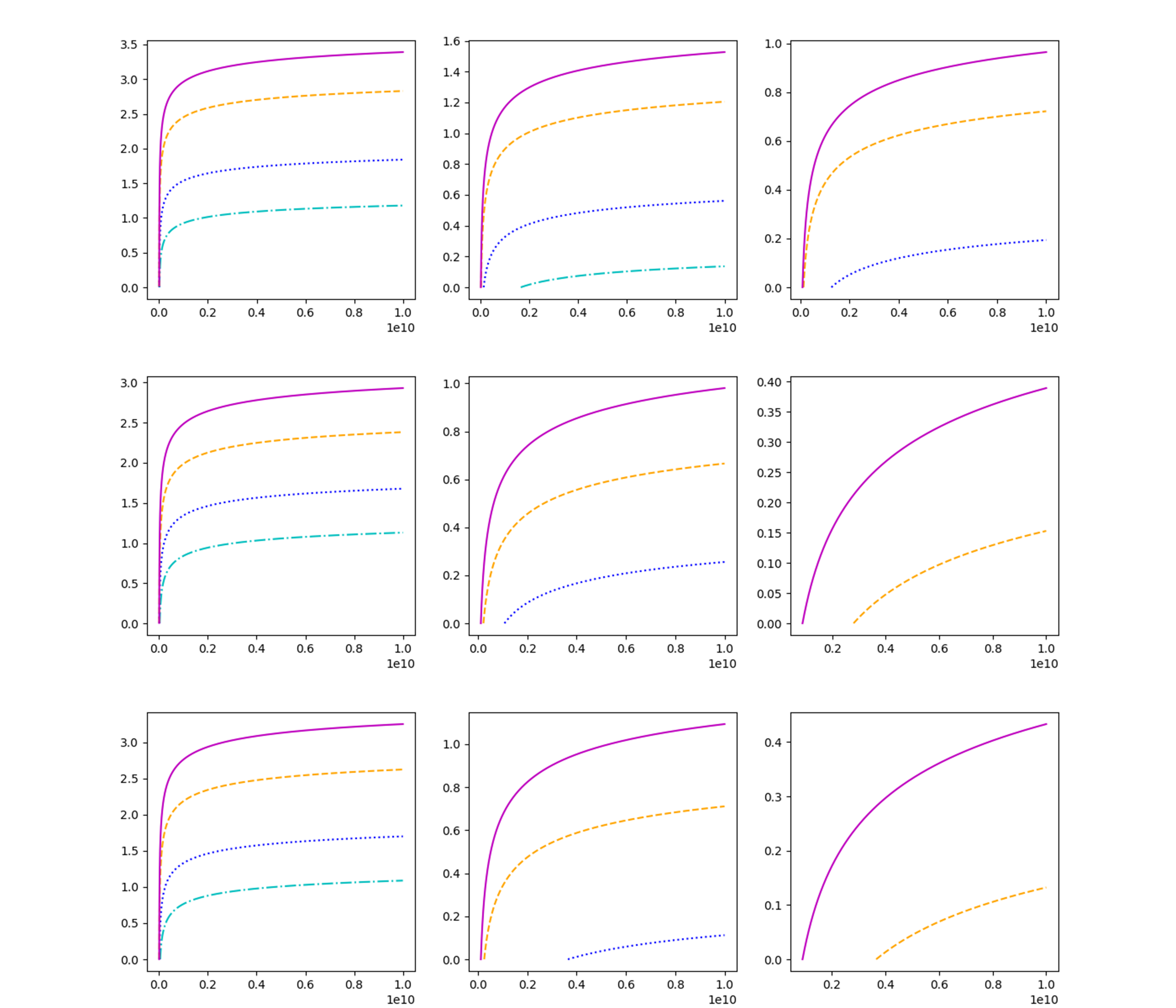}
    \caption{The first row is the case of the number of coins $\kappa = 1$; the second row is the case of the number of coins $\kappa=2$; the last row is the case of the number of coins $\kappa=3$; $x$-axis is the number of signals sent $N$; $y$-axis is the random bit rate; magenta solid is $|W| = 2^{\kappa}\cdot 21$;  orange-dashed is $|W| = 2^{\kappa}\cdot 11$; blue-dotted is $|W| = 2^{\kappa}\cdot 5$; cyan-dot-dashed is $|W| = 2^{\kappa}\cdot 3$. left graph is without noise ($w(q)=0$); middle graph is with $15\%$ noise in the source; right graph has $20\%$ noise.}
    \label{fig:optimized_not_using_memory_case_02}
\end{figure}

\section{Closing Remarks}
In this paper, we newly devise various memory-based QW-QRNG protocols by using an HD-QW state \cite{rohde2013quantum}. We explore the memory-based QW-QRNG protocols in multiple scenarios, including the using all case (using memory and active coins), the using memory coins, and the not using memory coins. We analyze these memory-based QW-QRNG protocols to be secure in the semi-source independent (SI) model. We simulate the protocols with different coin operators (Hadamard, generalized, and flip-coin operators) to optimize the randomness of the bit string. Throughout the simulations, we show exciting behaviors of the protocols depending on the size of the memory space and the number of quantum recycled coins. For example, in the protocol using the odd number of recycled coins to generate a random bit, the protocol can improve the random bit rate against overall noises. But when the protocol uses the even number of recycled coins, it enhances the random bit rate of the low-dimensional position space against general noises. This simulation result may connect to some open problems, particularly analyzing how the number of recycled coins affects the random bit rate of the memory-based QW-QRNG over noises. Also, developing and analyzing a protocol of QKD with the memory-based quantum walk state will be another exciting problem. Moreover, devising a non-local game \cite{clauser1969chsh, francesco2012nonlocal, matthew2017nonlocal, ming2019nonlocal, johnny2022nonlocal} with an entangled state via quantum walks \cite{Ivens2005entangled, omar2006entangled, abal2006entangled, salvador2009entangled, yusuke2010entangled, scott2011entangled, semra2012entangled, meng2021entangled} will be an exciting project so that we may exploit it to develop a device-independent QRNG/QKD protocol \cite{artur1992qc_bell, stefano2009diqkd, lluis2011diqkd, charles2013diqkd, yang2018diqrng, umesh2019diqkd, yanbao2020diqrng, david2021diqkd} without the measurement independent assumption in the non-local game \cite{thomas2010freedom_choice, bernhard2012freedom_choice, andrew2019freedom_choice, michael2020freedom_choice, rafael2021freedom_choice}.

\section*{Acknowledgment}

The author would like to thank Walter O. Krawec for valuable feedback and helpful discussions. Also, the author relishes significant comments from the anonymous critics, which seriously ripens the paper's quality.


\end{document}